\newtheorem{problem}{Problem}
\newtheorem{theorem}{Theorem}[section]
\newtheorem{lemma}[theorem]{Lemma}
\newtheorem{definition}[theorem]{Definition}
\newtheorem{remark}[theorem]{Remark}
\newenvironment{fminipage}%
{\begin{Sbox}\begin{minipage}}%
		{\end{minipage}\end{Sbox}\fbox{\TheSbox}}
\def\abs#1{\left|#1  \right|}
\def\trace#1{\mathrm{Tr} \left(#1 \right)}
\def\norm#1{\left\| #1 \right\|}
\def\smallnorm#1{\| #1 \|}
\def\calG{\mathcal{G}}
\def\calH{\mathcal{H}}
\def\norm#1{\left\| #1 \right\|}
\def\kh#1{\left( #1 \right)}
\newcommand{\removelatexerror}{\let\@latex@error\@gobble}
\newcommand\LL{\bm{\mathit{L}}}
\def\trace#1{\mathrm{Tr} \left(#1 \right)}
\newcommand\XX{\boldsymbol{\mathit{X}}}
\newcommand\yy{\boldsymbol{\mathit{y}}}
\newcommand\xx{\boldsymbol{\mathit{x}}}
\newcommand\aaa{\boldsymbol{\mathit{a}}}
\newcommand\bb{\boldsymbol{\mathit{b}}}
\newcommand\ee{\boldsymbol{\mathit{e}}}
\renewcommand\AA{\boldsymbol{\mathit{A}}}
\newcommand\BB{\boldsymbol{\mathit{B}}}
\newcommand\JJ{\boldsymbol{\mathit{J}}}
\newcommand\DD{\boldsymbol{\mathit{D}}}
\newcommand\TT{\boldsymbol{\mathit{T}}}
\newcommand\QQ{\boldsymbol{\mathit{Q}}}
\newcommand\vvv{\boldsymbol{\mathit{v}}}
\newcommand{\SDDMSolver}{\textsc{Solve}}
\DeclareMathOperator*{\argmin}{arg\,min}
\DeclareMathOperator*{\argmax}{arg\,max}
\begin{document}
\title{Efficient Algorithms for Minimizing the Kirchhoff Index via Adding Edges}

\author{Xiaotian~Zhou, 
        Ahad~N.~Zehmakan,
        and~Zhongzhi~Zhang,~\IEEEmembership{Member,~IEEE}

\IEEEcompsocitemizethanks{
\IEEEcompsocthanksitem This work was supported by the National Natural Science Foundation of China under Grant 62372112 and Grant 61872093. \textit{(Corresponding author: Zhongzhi~Zhang.)} 

\IEEEcompsocthanksitem  Xiaotian~Zhou, and Zhongzhi Zhang  are with Shanghai Key Laboratory of Intelligent Information Processing, School of Computer Science, Fudan University, Shanghai 200433, China.   Zhongzhi Zhang is also with Research Institute of Intelligent Complex Systems, Fudan University, Shanghai 200433, China.
Ahad~N.~Zehmakan is with School of Computing, Australian National University, Canberra, Australia.
\protect\\
E-mail: 22110240080@m.fudan.edu.cn, ahadn.zehmakan@anu.edu.au, zhangzz@fudan.edu.cn
}
\thanks{Manuscript received xxxx; revised xxxx.}
}

\markboth{IEEE Transactions on Knowledge and Data Engineering,~Vol.~xx, No.~xx, April~2024}%
{Shell \MakeLowercase{\textit{et al.}}: Bare Demo of IEEEtran.cls for Computer Society Journals}

\maketitle
\begin{abstract}
The Kirchhoff index, which is the sum of the resistance distance between every pair of nodes in a network, is a key metric for gauging network performance, where lower values signify enhanced performance. In this paper, we study the problem of minimizing the Kirchhoff index by adding edges. We first provide a greedy algorithm for solving this problem and give an analysis of its quality based on the bounds of the submodularity ratio and the curvature. Then, we introduce a gradient-based greedy algorithm as a new paradigm to solve this problem. To accelerate the computation cost, we leverage geometric properties, convex hull approximation, and approximation of the projected coordinate of each point. To further improve this algorithm, we use pre-pruning and fast update techniques, making it particularly suitable for large networks.  Our proposed algorithms have nearly-linear time complexity. We provide extensive experiments on ten real networks to evaluate the quality of our algorithms. The results demonstrate that our proposed algorithms outperform the state-of-the-art methods in terms of efficiency and effectiveness. Moreover, our algorithms are scalable to large graphs with over 5 million nodes and 12 million edges.
\end{abstract}

\begin{IEEEkeywords}
Resistance distance, Kirchhoff index, graph algorithm.
\end{IEEEkeywords}


\IEEEpeerreviewmaketitle


\section{Introduction}

\IEEEPARstart{R}{} egarded as a foundational metric in graph theory, effective resistance—also referred to as resistance distance—is derived in the context of an electrical network. This metric has fostered a plethora of pivotal insights and contributions spanning both theoretical and applied dimensions. On the theoretical side, it has emerged as a linchpin for algorithmic graph theory, facilitating groundbreaking developments in computational algorithms that address key challenges such as spectral graph sparsification~\cite{SpSr08}, maximum flow approximations~\cite{ChKeMaSpTe11}, and the generation of random spanning trees~\cite{MaStTa15}, among others. Concurrently, its applied dimensions have seen successful deployments in many data mining domains such as graph clustering~\cite{AlAnLaOv18, QiHa07}, collaborative recommendation systems~\cite{FoPiReSa07}, 
graph embedding techniques~\cite{CaZhCh18}, image segmentation~\cite{BePaPr08} and network influence mining~\cite{ShYiZh18,LiPeShYiZh19}. 
Additionally, the data management community adopts effective resistance in developing efficient graph systems~\cite{QiDhTaPeWa21} 
and applications~\cite{ShMaWuCh14}. 
This dual relevance—both theoretical and applied—has catalyzed extensive scholarly scrutiny over recent decades~\cite{DoSibu18}. 

Beyond the fundamental concept of resistance distance itself, a large range of graph invariants based on this metric has been defined and explored, most notably the Kirchhoff index~\cite{KlRa93, GhBoSa08}. This index, computed as the aggregate of effective resistances across all pairs of nodes within a graph, has found applications across a diverse spectrum of domains~\cite{ShZh19}. It serves as an instrumental tool for evaluating network connectedness~\cite{TiLe10}, assessing the global utility of social recommender architectures~\cite{WoLiCh16}, and gauging the robustness of first-order consensus algorithms in environments fraught with noise~\cite{PaBa14, QiZhYiLi19,YiZhPa20}. Furthermore, it has been used as the foundation to define efficiency and fairness in information access on social networks~\cite{LiZhZeZh23} and the oversquashing in GNN~\cite{BlWaNaWa23}. Particularly, in the data mining and data management communities, the Kirchhoff index has been used to define an importance measure for edges~\cite{LiZh18, YiShLiZh18}, extract sparse representation of a graphs~\cite{SaHaAbSh21,CuXiZh19}, and measure graph vulnerability and robustness~\cite{FrYaKuToCh23}.  In most setups, a network with smaller values of Kirchhoff index is more beneficial and enhances the performance.


Since the Kirchhoff index encodes the performance of various systems, with smaller Kirchhoff index indicating better performance, a substantial amount of effort has been devoted to optimizing the Kirchhoff index by executing different graph operations, cf.~\cite{BlWaNaWa23}. Motivated by real-world instances such as link recommendation systems in online social platforms, a substantial amount of attention has been devoted to optimizing the Kirchhoff index by adding edges~\cite{SuShLyDo15, PrKoMe22, YaMaQiWe18, HuHeTa19}. The problem studied in this paper falls under the umbrella of this line of work.

We study the following optimization problem: For a given connected undirected unweighted graph $\calG=(V,E)$ with $n$ nodes and $m\ll n^2$ edges, a positive integer $k\ll n$, how to add $k$ nonexistent edges from a candidate edge set $Q=(V \times V) \backslash E$ to graph $\calG$, so that the Kirchhoff index for the resulting graph is minimized. Although this problem has been widely studied by the prior work~\cite{SuShLyDo15, PrKoMe22, YaMaQiWe18, HuHeTa19, BlWaNaWa23},  existing algorithms are not practical for large networks with millions of nodes, which emerge regularly in the real world, since these algorithms have a quadratic (or larger) run time. The main aim of the present work is to address this issue by devising a more efficient algorithm which can handle networks of large size.


The conventional and predominant approach to tackle the Kirchhoff index optimization problem has been the employment of the greedy approach~\cite{SuShLyDo15,YaMaQiWe18}. This algorithm, through its iterative selection of an optimal edge over $k$ rounds, yields a suboptimal solution to the problem. The time complexity of their greedy algorithm stands at $O(kn^3)$, rendering it unsuitable for large-scale networks. Furthermore, we present an approximation ratio bound for the greedy algorithm, underpinned by the bounds of submodularity ratio and the curvature of the related function. Additionally, we enhance the time complexity of the greedy algorithm from $O(kn^3)$ to $O(n^3+kn^2)$ by utilizing Sherman-Morrison formula. This greedy algorithm solves the problem to some extent, but does not fundamentally address the difficulty of solving it on large-scale networks.

Given the substantial computational demands of the greedy algorithm in solving the Kirchhoff index problem, the authors of~\cite{PrKoMe22} introduced several algorithms to expedite the computation. Among these, the most proficient is the state-of-the-art algorithm, \textsc{ColStochJLT}. This algorithm harnesses randomized techniques to augment the quality of the greedy algorithm, incorporating a sub-sampling step and random projection. Its time complexity is denoted as $\tilde{O}(n^2+km)$, where the $\tilde{O}(\cdot)$ notation obscures $\log n$ factors. Nonetheless, even $\tilde{O}(n^2+km)$ can be computationally intensive, and their empirical studies are confined to networks with fewer than 100,000 nodes. Moreover, their sub-sampling step, grounded in empirical observations, lacks rigorous theoretical justification.

To mitigate the algorithm's time complexity, we advocate for a gradient-based greedy algorithm. This gradient-based greedy algorithm finds one optimal edge in each iteration by computing the gradient of each candidate edge, rather  than the marginal decrease of the Kirchhoff index. The primary bottleneck in the traditional greedy algorithm is the necessity to query $|Q|$ instances of marginal decrease in the objective function. This process demands $\Omega(n^2)$ time, especially considering that real-world networks tend to be sparse. Leveraging the gradient-based greedy algorithm, we can substantially curtail the number of queries. This reduction is achieved by harnessing geometric properties to determine the convex hull of $n$ points, the coordinates of which are approximated through projection and the Laplacian solver. This algorithm has nearly-linear time complexity and provides a relative error guarantee in approximating the largest gradient in each iteration. Furthermore, we introduce pre-pruning and rapid update techniques to accelerate this gradient-based greedy algorithm. We also propose an algorithm by determining the convex hull only once, and this algorithm is the most efficient one.  All these gradient-based algorithms reduce time complexities, and we give error analyses for each iteration. Comprehensive experimental evaluations underscore both the efficacy and efficiency of our proposed algorithms.

Our research culminates in the following key contributions:

\textbf{Advanced Analysis of the Kirchhoff Index and Greedy Algorithm:} Upon a detailed examination of the objective function inherent to the Kirchhoff index optimization problem, it becomes evident that the said function is not supermodular in nature. This directly implies that the conventional greedy algorithm, traditionally deployed to address this issue, lacks the desired $(1-1/e)$ approximation guarantee. To mitigate this limitation, we delineate the boundaries of the submodularity ratio, represented as $\gamma$, in conjunction with the curvature $\alpha$ of the function under consideration. As a result, the approximation guarantee of $(1-e^{-\gamma\alpha})/\alpha$ is established for the greedy algorithm. This greedy algorithm has $O(n^3+kn^2)$ time complexity.
In a subsequent advancement, we augment the algorithm's efficiency through the incorporation of two approximation methodologies, culminating in an enhanced greedy algorithm characterized by a $\tilde{O}(k|Q|\epsilon^{-2})$ time complexity, accompanied by a comprehensive error analysis of the marginal decrease.

\textbf{A Novel Paradigm for the Kirchhoff Index Optimization Problem along with Efficient Algorithmic Solutions:} We put forward a gradient-based greedy algorithm tailored for the Kirchhoff index optimization problem. Instead of directly computing the marginal decrease of the Kirchhoff index, this gradient-based greedy algorithm computes the gradient of each candidate edge and finds one edge with the largest gradient. An important property of this approach is the circumvention of the otherwise expensive quadratic number of queries typically encountered in each iteration. This reduction is achieved by leveraging the convex hull approximation algorithm, thereby considerably minimizing the query count. The efficiency of this method is conditional on the fast calculation of each node's coordinates, estimated via random projection in tandem with the Laplacian solver.
This algorithm enjoys an almost linear time complexity and rigorous error analysis. To further improve the computational efficacy, we introduce a novel pre-pruning and update mechanism, which does not affect the algorithm's error analysis.

\textbf{Comprehensive Experimental Evaluation:} To complement our theoretical findings, we conduct a large set of experiments. We evaluate the quality of the algorithms on a diverse set of real-world networks. The outcomes of our experiments demonstrate that our proposed algorithms consistently and significantly outperform the state-of-the-art algorithms. Thus, our proposed solutions not only possess theoretical guarantees, but also are very effective and efficient in practice.

\section{Preliminaries}


\subsection{Notations}

We use $\AA^{\top}$ and $\aaa^{\top}$ to represent the transpose of $\AA$ and $\aaa$, respectively. Let $ \ee_i $ denote the column vector of appropriate dimension, where the $i$-th element is $1$, and other elements are $0$. We define $\mathbf{1}$ to be an appropriate-dimension column vector with all entries being ones. Let $\JJ$ be the matrix with all elements being 1. For a matrix $\AA$, $\AA_{i,j} $ denotes the element of $\AA$ at $i$-th row and $j$-th column. Let $\aaa_i$ be the $i$-th element of the vector $\aaa$. For any vector $\aaa$, we use $\norm{\aaa}_2=\sqrt{\sum_i \aaa_i^2}$ to denote the $\ell_2$ norm of the vector $\aaa$, and use $\norm{\aaa}_{\XX} = \sqrt{\aaa^\top \XX \aaa}$ to denote the matrix norm of the vector $\aaa$ for the given matrix $\XX$.
For two non-negative scalars $a$ and $b$, we use $a \approx_{\epsilon} b$ to denote that $a$ is an $\epsilon$-approximation of $b$ obeying relation $(1-\epsilon) b \leq a \leq(1+\epsilon) b$.


\subsection{Graph and Related Matrices}
We consider $\calG = (V,E)$, a connected, undirected, and unweighted graph comprised of $|V|=n$ nodes and $|E|=m$ edges. Let $N_i$ denote the set of nodes adjacent to a node $i\in V$, in which case the degree of node $i$ is given by $|N_i|$.

The adjacency matrix of graph $\calG$, denoted by $\AA$, is an $n \times n$ matrix such that $\AA_{i,j}=1$ if nodes $i$ and $j$ are adjacent, and $\AA_{i,j}=0$ otherwise. The degree matrix, represented by $\DD$ of graph $\calG$, is a diagonal matrix $\DD=\text{diag}(d_1,d_2,\cdots,d_n)$, where the $i$-th diagonal element $d_i=|N_i|$ represents the degree of node $i$. 

The Laplacian matrix of graph $\calG$, denoted by $\LL$, is expressed as $\LL=\DD-\AA$. This matrix is symmetric, semi-positive definite, and singular, thereby rendering it non-invertible. The eigenvalues of $\LL$ are sorted in non-decreasing order as $0=\lambda_1(\LL)\leq \lambda_2(\LL) \leq \cdots \leq \lambda_n(\LL)$. The pseudoinverse of $\LL$ is denoted by $\LL^\dag$ and calculated as $\LL^\dag = \left(\LL +\frac{1}{n}\JJ\right)^{-1} - \frac{1}{n}\JJ$~\cite{GhBoSa08}.

\subsection{Effective Resistance and Kirchhoff Index}




In any given graph $\calG=(V,E)$, the effective resistance $r_{ij}$ between any two nodes, $i$ and $j$, can be computed  by $r_{ij}=\LL^{\dag}_{i,i}+\LL^{\dag}_{j,j}-\LL^{\dag}_{i,j}-\LL^{\dag}_{j,i}$.

The Kirchhoff index of a graph $\mathcal{G}$ is defined as $K(\mathcal{G})=\frac{1}{2}\sum_{i,j\in V}r_{ij}$.
This index can be computed based on the sum of the inverse non-zero Laplacian eigenvalues~\cite{KlRa93}, or equivalently, the trace of the pseudoinverse of the Laplacian matrix~\cite{GhBoSa08}, as expressed by $K(\mathcal{G}) = n \sum_{i=2}^n \frac{1}{\lambda_i(\LL)} = n \trace{\LL^\dag}$.

\subsection{Concepts Related to Set Function}

A significant portion of set functions in optimization problems are not submodular (or supermodular).
For such functions, one can define some quantities to observe the gap between them and submodular (or supermodular) functions.
For a set $Q$, we use $2^{Q}$ to denote the set of all possible subsets of $Q$.

\begin{definition}[Submodularity ratio~\cite{AbDa11}]
For a non-negative set function $f:2^{Q}\rightarrow \mathbb{R}$, its submodularity ratio is defined as the largest scalar $\gamma$ satisfying
\begin{equation*}
\sum_{u \in T\backslash H} (f(H \cup \{u\})-f(H)) \geq \gamma (f(T)-f(H)), \forall H\subseteq T \subseteq Q.
\end{equation*}
\end{definition}

\begin{definition}[Curvature~\cite{BiBuKrTs17}]
For a non-negative set function $f:2^{Q}\rightarrow \mathbb{R}$, its curvature is defined as the smallest scalar $\alpha$ satisfying
\begin{align*}
f(T)-f(T\backslash \{u\}) \geq (1-\alpha)(f(H)-f(H\backslash \{u\})), \\
\forall H\subseteq T\subseteq Q, u \in H.
\end{align*}
\end{definition}



\section{Problem Formulation}

The Kirchhoff index $K(\calG)$ of a graph $\calG$ is a good measure in many application scenarios. For example, it serves as a measure of overall network connectivity~\cite{TiLe10}, edge centrality within complex networks~\cite{LiZh18}, and robustness of the first-order consensus algorithm in noisy networks~\cite{PaBa14, QiZhYiLi19,YiZhPa20}.
In these practical aspects, a smaller value of $K(\calG)$ indicates that the systems have a better performance. Based on Rayleigh's monotonicity law, adding edges to graph $\calG$ will lead to a decrease of $K(\calG)$~\cite{ElSpVaJaKo11}, which motivates us to study the problem of how to minimize $K(\calG)$ by adding $k$ new edges.

\subsection{Problem Statement}

Based on Rayleigh's monotonicity law, adding a set of edges to the graph will result in a decrease in the effective resistance between any pair of nodes~\cite{ElSpVaJaKo11}. Consequently, the Kirchhoff index of the graph will decrease when new edges are incorporated into the graph. We focus on determining how to optimally add a batch of $k$ new nonexistent edges such that the decrease in the Kirchhoff index is maximized. We present the following Kirchhoff index minimization problem, which has been studied in previous works~\cite{SuShLyDo15,YaMaQiWe18,PrKoMe22,KoAc23,PiSo22,HuHeTa19,BlWaNaWa23}.

\begin{problem}\label{pro:kim}
Given a connected graph $\calG=(V,E)$, a candidate edge set $Q=(V\times V) \backslash E$ and an integer $k\ll |Q|$, we aim to identify a subset $T\subset Q$ of $k$ edges. These edges are then added to the graph to create a new graph $\calG(T)=(V,E\cup T)$ such that the Kirchhoff index $K(T):=K(\calG(T))$ of the graph $\calG(T)$ is minimized. More formally, the objective is to find:
\begin{equation*}
T^* = \argmin_{T \subseteq Q,|T|=k} K(T).
\end{equation*}
\end{problem}

When $\calG$ is a dense graph, the Kirchhoff index $K(\calG)$ is very small for practical applications.  In what follows, we focus on sparse graphs, indicating $m \ll n^2$, which is satisfied in most real-world networks. Also, due to the cost constraints of the problem in real scenarios, we can only add a small number of edges, so we let $k \ll |Q|$ in Problem~\ref{pro:kim}.

The Problem~\ref{pro:kim} is NP-hard~\cite{KoAc23}, and is proved by the reduction from 3-colorability problem. The combinatorial nature of this problem allows for a brute-force approach, which involves exhausting all $\binom{|Q|}{k}$ possible subsets of edges. For each subset of edges $T$, we can compute the Kirchhoff index for its associated graph by inverting a matrix, a process which requires $O(n^3)$ time. This results in a total complexity of $O(\binom{|Q|}{k}n^3)$, which suffers from a combinatorial explosion as $k$ increases, making it computationally infeasible even for moderately sized $k$. To avoid such high computational cost, we will explore the way to solve this problem in next sections.

Note that although the Kirchhoff index minimization problem has been previously proposed and studied on unweighted graphs, both the problem and our solutions can be extended to weighted graphs if the candidate edges have the same weight.

\subsection{Deterministic Greedy Algorithm}
In~\cite{SuShLyDo15,PrKoMe22}, the authors proposed a greedy algorithm by adding edges with the largest marginal decrease (best decrease of the Kirchhoff index) in each of the $k$ iterations. To bypass the repeated operations of marginal decrease computation involved in matrix inversion, they adopted the Sherman-Morrison formula~\cite{Me73} to execute rank-1 updates of matrices. To be specific, for any edge $e=(i,j)\in Q$, the marginal decrease $\Delta(e)=K(\calG)-K(\{e\})$ is computed by the Sherman-Morrison formula as
\begin{equation}\label{equ:sm1}
    \Delta(e) =  n \trace{\LL^\dag} - n \trace{(\LL+\bb_e\bb_e^\top)^\dag} = n\frac{\bb_e^\top \LL^{2\dag} \bb_e}{1+\bb_e^\top \LL^\dag \bb_e},
\end{equation}
where $\bb_e=\ee_i-\ee_j$.
Moreover, after adding an edge $e\in Q$ to the graph, the pseudoinverse of the Laplacian matrix of the new graph can be updated by
\begin{equation}\label{equ:sm2}
    (\LL+\bb_e\bb_e^\top)^\dag = \LL^\dag - \frac{\LL^{\dag}\bb_e \bb_e^\top \LL^{\dag}}{1+\bb_e^\top \LL^\dag \bb_e}.
\end{equation}
Based on Equations~\eqref{equ:sm1} and~\eqref{equ:sm2}, in~\cite{SuShLyDo15,PrKoMe22} a simple greedy algorithm was designed as outlined in Algorithm~\ref{alg:det} excluding Line 2 and Line 10. This simple greedy first computes the pseudoinverse of $\LL$ as a preprocessing step in $O(n^3)$ (Line 1).
Then, in each iteration, the marginal gain $\Delta(e)$ of nonexistent edges $e\in Q$ are computed in $O(n)$ time per edge (Line 4). The edge with the largest marginal decrease is added to the graph (Line 5), and the pseudoinverse is updated in $O(n^2)$ (Line 9). Thus, the total time complexity of this greedy algorithm is in $O(n^3+k|Q|n)$~\cite{SuShLyDo15,PrKoMe22}.

However, their proposed greedy algorithm can be improved. Next, we provide a more efficient greedy algorithm, and analyze its quality.

\subsubsection{Reducing Time Complexity.}
We notice that by using the Sherman-Morrison formula, the simple greedy algorithm maintains the pseudoinverse of the Laplacian matrix, which results in the $O(n)$ time cost for computing the marginal decrease $\Delta(e)$ for each $e\in Q$. However, it is still time-consuming. To accelerate the marginal decrease computation process, we maintain two matrices, $\LL^\dag$ and $\LL^{2\dag}$, during $k$ iterations, which is different from the greedy algorithm proposed in~\cite{SuShLyDo15,PrKoMe22}.  Our proposed deterministic greedy algorithm is called \textsc{Deter}, outlined in Algorithm~\ref{alg:det}.

First, we precompute two matrices, $\LL^\dag$ and $\LL^{2\dag}$, in Lines 1 and 2, in $O(n^3)$ time. Then, in each iteration, the computation of $\Delta(e)$ can be reduced to $O(1)$ time complexity per edge. After we select one optimal edge $e$, we add it to the graph and update $\LL^\dag$ and $\LL^{2\dag}$. According to the Sherman-Morrison formula, we have
\begin{align*}
   & \left(\LL+\bb_e\bb_e^\top\right)^{2\dag} = \left(\LL^\dag - \frac{\LL^\dag \bb_e\bb_e^\top\LL^\dag}{1+\bb_e^\top\LL^\dag \bb_e}\right)^2 \\
    =& \LL^{2\dag} + \frac{\LL^\dag \bb_e \bb_e^\top\LL^{2\dag}\bb_e\bb_e^\top \LL^\dag}{(1+\bb_e^\top\LL^\dag \bb_e)^2} - \frac{\LL^{2\dag}\bb_e\bb_e^\top \LL^\dag}{1+\bb_e^\top\LL^\dag \bb_e} - \frac{\LL^{\dag}\bb_e\bb_e^\top \LL^{2\dag}}{1+\bb_e^\top\LL^\dag \bb_e},
\end{align*}
indicating that $\LL^{2\dag}$ can be updated in $O(n^2)$ time complexity in Line 10.

Based on this improvement, our proposed deterministic greedy algorithm \textsc{Deter} has the total time complexity of $O(n^3+kn^2)$, which is smaller than $O(n^3+k|Q|n)$~\cite{SuShLyDo15,PrKoMe22}, since $Q$ is quadratic in $n$ under the assumption that $m \ll n^2$.

Finally, it should be mentioned that the idea for algorithm \textsc{Deter} was initially proposed in~\cite{BlWaNaWa23}. Here we offer a more detailed analysis and a clearer expression of $(\LL+\bb_e\bb_e^\top)^{2\dag}$, which helps us better understand how the algorithm \textsc{Deter} operates and how its time complexity is reduced.

\begin{small}
\begin{algorithm}[tb]
\begin{small}
	\caption{\textsc{Deter}$(\calG, S_1, S_0, Q, k)$}
    \label{alg:det}
	\Input{
		A connected graph $\calG=(V,E)$;
        a candidate edge set $Q=(V\times V)\backslash E$;
        an integer $1 \leq k \leq |Q|$\\
	}
	\Output{
		A subset $T\subseteq Q$ with $|T|=k$
	}
Compute $\LL^{\dag}$\;
Compute $\LL^{2\dag}$\;
Initialize solution $T = \emptyset$ \;
\For{$i = 1$ to $k$}{
Compute $\Delta(e)= K(\calG)-K(\{e\})$ for each $e \in Q$\;
Select $e_i$ s. t. $e_i \gets \mathrm{arg\, max}_{e \in Q} \Delta(e)$\;
Update solution $T \gets T \cup \{ e_i \}$ \;
Update the graph $\calG \gets \calG(V, E \cup \{ e_i \})$ \;
Update $\LL^{\dag} \gets \LL^{\dag} - \frac{ \LL^\dag\bb_e \bb_e^\top\LL^\dag }{1+\bb_e^\top \LL^\dag \bb_e}$\;
Update $\LL^{2\dag} \gets \LL^{2\dag} + \frac{\LL^\dag \bb_e \bb_e^\top\LL^{2\dag}\bb_e\bb_e^\top \LL^\dag}{(1+\bb_e^\top\LL^\dag \bb_e)^2} - \frac{\LL^{2\dag}\bb_e\bb_e^\top \LL^\dag}{1+\bb_e^\top\LL^\dag \bb_e} - \frac{\LL^{\dag}\bb_e\bb_e^\top \LL^{2\dag}}{1+\bb_e^\top\LL^\dag \bb_e}$\;

Update the candidate edge set $Q \gets Q \backslash  \{e_i\} $ \;}
\Return $T$.
\end{small}
\end{algorithm}
\end{small}

\subsubsection{Evaluating the Greedy Algorithm's Quality}

Although the objective function $K(\cdot)$ of Problem~\ref{pro:kim} is not supermodular~\cite{SuShLyDo17correct}, the greedy algorithm usually has a good quality~\cite{BiBuKrTs17,AbDa11} with a tight approximation guarantee of $(1-e^{-\gamma\alpha})/\alpha$~\cite{BiBuKrTs17} where $\gamma$ and $\alpha$ are submodularity ratio and curvature of the non-negative increasing function $f(T) = K(\calG)- K(T)$ for $T \subseteq Q$, respectively. The effectiveness of Algorithm~\ref{alg:det} is summarized in Theorem~\ref{them:appx}, which provides a relative error guarantee compared to the optimal solution and can be proved similarly using the method in~\cite{BiBuKrTs17}.


\begin{theorem}\label{them:appx}
Let $\gamma \in [0,1]$ and $\alpha \in [0,1]$ be submodularity ratio and curvature of the function $f(T) = K(\calG)- K(T)$ for $T \subseteq Q$. Then, the edge set $T$ returned by Algorithm~\ref{alg:det} satisfies
\begin{equation}
K(\calG)-K(T) \geq \frac{1}{\alpha}(1-e^{-\alpha\gamma})(K(\calG)-K(T^*)),
\end{equation}
where 
\begin{equation*}
T^* =\argmin_{T \subseteq Q, |T|=k} K(T).
\end{equation*}
\end{theorem}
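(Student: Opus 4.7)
The plan is to recast Problem~\ref{pro:kim} as the maximization of the set function $f(T) := K(\calG) - K(T)$ on ground set $Q$, and then apply the approximation guarantee for greedy maximization of monotone non-submodular set functions parametrized by submodularity ratio $\gamma$ and curvature $\alpha$ that is developed in~\cite{BiBuKrTs17}. The first step is to verify the structural hypotheses: $f(\emptyset)=0$ by construction, $f$ is non-negative because $K(\calG(T)) \leq K(\calG)$ for all $T$ by Rayleigh's monotonicity law, and for the same reason $f$ is monotone non-decreasing on $2^Q$. Under these hypotheses the quantities $\gamma \in [0,1]$ and $\alpha \in [0,1]$ in the preliminaries are well defined for $f$, and the selection rule $e_i \in \argmax_{e \in Q} \Delta(e)$ in Algorithm~\ref{alg:det}---where $\Delta(e)$ is the marginal decrease with respect to the \emph{current} graph $\calG(V, E \cup T_{i-1})$---is exactly the standard greedy rule $e_i \in \argmax_{e \in Q \setminus T_{i-1}} [f(T_{i-1} \cup \{e\}) - f(T_{i-1})]$.

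Next I would set up the recurrence between consecutive iterates $T_{i-1}$ and $T_i$. Applying the submodularity ratio on the pair $T_{i-1} \subseteq T_{i-1} \cup T^*$ and using monotonicity gives
\begin{equation*}
\sum_{e \in T^* \setminus T_{i-1}} [f(T_{i-1} \cup \{e\}) - f(T_{i-1})] \;\geq\; \gamma\,[f(T^*) - f(T_{i-1})].
\end{equation*}
Since $|T^* \setminus T_{i-1}| \leq k$ and the greedy step picks the best marginal, this yields
\begin{equation*}
f(T_i) - f(T_{i-1}) \;\geq\; \frac{\gamma}{k}\,[f(T^*) - f(T_{i-1})].
\end{equation*}
To recover the $\tfrac{1}{\alpha}(1-e^{-\alpha\gamma})$ factor rather than only the $(1 - e^{-\gamma})$ bound of~\cite{AbDa11}, I would invoke the curvature definition on the chain $T_{i-1} \setminus \{e\} \subseteq T^* \cup (T_{i-1} \setminus \{e\})$ to bound the aggregate loss incurred by elements already committed to $T_{i-1}$ by a factor $(1-\alpha)$, producing the curvature-sharpened recursion of~\cite{BiBuKrTs17}. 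Unrolling this recursion over $k$ iterations and using the standard inequality $\prod_{i=1}^{k}(1 - \gamma/k) \leq e^{-\gamma}$ (in its curvature-adjusted form) gives
\begin{equation*}
f(T_k) \;\geq\; \frac{1}{\alpha}\bigl(1 - e^{-\alpha\gamma}\bigr)\,f(T^*),
\end{equation*}
which, after rewriting $f(T) = K(\calG) - K(T)$, is precisely the claimed bound.

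The main obstacle is the curvature-aware telescoping: a direct application of submodularity ratio alone only yields the $(1 - e^{-\gamma})$ factor, and squeezing out the extra $1/\alpha$ improvement requires the refined potential-function argument of~\cite{BiBuKrTs17} in which the loss of each greedy step is re-weighted by a curvature-discounted version of $f(T^*) - f(T_{i-1})$ before the exponential bound is applied. Once this telescoping lemma is in place the remaining arithmetic is routine, so the whole proof essentially consists of verifying that $f$ fits the framework of~\cite{BiBuKrTs17} and then importing their analysis verbatim.
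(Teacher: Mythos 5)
Your proposal is correct and matches the paper's approach: the paper gives no detailed proof of this theorem, simply noting that it "can be proved similarly using the method in~\cite{BiBuKrTs17}" after observing that $f(T)=K(\calG)-K(T)$ is non-negative and monotone, which is exactly the verification-then-import strategy you outline.
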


Since the effectiveness of the greedy algorithm in Problem~\ref{pro:kim} can be evaluated by determining its submodularity ratio $\gamma$ and curvature $\alpha$, we next provide bounds for $\gamma$ and $\alpha$, respectively.


\begin{theorem}
\label{lem:sub}
The submodularity ratio $\gamma$ of the set function $f(S)=K(\calG)-K(S)$ is bounded as follows:
\begin{equation}
\label{eq:lowgamma}
1 > \gamma \ge \kh{\frac{ \lambda_2(\LL)}{n}}^2>0,
\end{equation}
and its curvature $\alpha$ is constrained by:
\begin{equation}
\label{eq:upalpha}
0 < \alpha \le 1- \kh{\frac{ \lambda_2(\LL)}{n}}^2< 1.
\end{equation}
\end{theorem}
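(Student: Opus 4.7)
The plan is to control both ratios through the Sherman-Morrison expression
\[
\Delta_S(u) \defeq K(\calG(S))-K(\calG(S\cup\{u\})) = \frac{n\,\bb_u^\top \LL_S^{2\dag}\bb_u}{1+\bb_u^\top \LL_S^\dag \bb_u},
\]
where $\LL_S\defeq\LL+\sum_{v\in S}\bb_v\bb_v^\top$ denotes the Laplacian of $\calG(S)$. Three spectral facts drive everything. First, $\LL_{K_n}\succeq\LL_S$ forces $\lambda_n(\LL_S)\le n$, while edge-addition monotonicity forces $\lambda_2(\LL_S)\ge\lambda_2(\LL)$. Second, since $\bb_u\perp\mathbf{1}$ and $\|\bb_u\|_2^2=2$, Rayleigh quotients give
\[
\frac{2}{\lambda_n(\LL_S)^2}\le \bb_u^\top\LL_S^{2\dag}\bb_u\le \frac{2}{\lambda_2(\LL_S)^2},\qquad \bb_u^\top\LL_S^\dag\bb_u\le \frac{2}{\lambda_2(\LL_S)}.
\]
Third, $Q\ne\emptyset$ means $\calG$ is non-complete, so choosing any non-edge $e$ yields $\LL\preceq\LL_{K_n-e}$ and a direct spectral computation gives $\lambda_2(\LL)\le\lambda_2(\LL_{K_n-e})=n-2$.

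\noindent\textbf{Curvature bound.} Fix $u\in H\subseteq T$ and use $f(S)-f(S\setminus\{u\})=\Delta_{S\setminus\{u\}}(u)$ to write the defining ratio for $\alpha$ as
\[
\frac{\Delta_{T\setminus\{u\}}(u)}{\Delta_{H\setminus\{u\}}(u)} = \frac{\bb_u^\top\LL_{T\setminus\{u\}}^{2\dag}\bb_u}{\bb_u^\top\LL_{H\setminus\{u\}}^{2\dag}\bb_u}\cdot\frac{1+\bb_u^\top\LL_{H\setminus\{u\}}^\dag\bb_u}{1+\bb_u^\top\LL_{T\setminus\{u\}}^\dag\bb_u}.
\]
Because $H\setminus\{u\}\subseteq T\setminus\{u\}$ gives $\LL_{H\setminus\{u\}}^\dag\succeq\LL_{T\setminus\{u\}}^\dag$, the second factor is at least $1$. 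For the first factor, applying the numerator lower bound at $\LL_{T\setminus\{u\}}$ (where $\lambda_n\le n$) and the denominator upper bound at $\LL_{H\setminus\{u\}}$ (where $\lambda_2\ge\lambda_2(\LL)$) yields a ratio of at least $(2/n^2)/(2/\lambda_2(\LL)^2)=(\lambda_2(\LL)/n)^2$. Rearranging produces $\alpha\le 1-(\lambda_2(\LL)/n)^2$.

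\noindent\textbf{Submodularity ratio.} Stacking the edges of $T\setminus H$ as columns of $B$ so that $\LL_T=\LL_H+BB^\top$, Sherman-Morrison-Woodbury together with cyclicity of trace gives
\[
K(H)-K(T) = n\trace{(\II+B^\top\LL_H^\dag B)^{-1}B^\top\LL_H^{2\dag}B}\le n\sum_{u\in T\setminus H}\bb_u^\top\LL_H^{2\dag}\bb_u,
\]
the inequality using $(\II+B^\top\LL_H^\dag B)^{-1}\preceq\II$. Bounding the denominator of each $\Delta_H(u)$ by $1+2/\lambda_2(\LL)$ gives
\[
\sum_{u\in T\setminus H}\Delta_H(u)\ge\frac{\lambda_2(\LL)}{\lambda_2(\LL)+2}\cdot n\sum_{u\in T\setminus H}\bb_u^\top\LL_H^{2\dag}\bb_u,
\]
so $\gamma\ge\lambda_2(\LL)/(\lambda_2(\LL)+2)\ge(\lambda_2(\LL)/n)^2$, where the last inequality is the rearrangement of $\lambda_2(\LL)(\lambda_2(\LL)+2)\le(n-2)n\le n^2$.

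\noindent\textbf{Strict inequalities and main obstacle.} The strict outer bounds $(\lambda_2(\LL)/n)^2>0$ and $1-(\lambda_2(\LL)/n)^2<1$ follow from connectedness, i.e.\ $\lambda_2(\LL)>0$. The strict inequalities $\gamma<1$ and $\alpha>0$ encode the non-supermodularity and non-submodularity of $-K$ noted in the paper, and can be certified by any small instance on which the appropriate ratio of marginal decreases is strictly separated from $1$ in the relevant direction. The main technical care lies in the bookkeeping of which Laplacian each eigenvalue bound is applied to and in the terminal algebraic step $\lambda_2(\lambda_2+2)\le n^2$; the only regime in which this could fail is the complete-graph case, which is automatically excluded by $Q\ne\emptyset$ via $\lambda_2(\LL)\le n-2$.
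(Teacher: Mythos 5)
Your proposal is correct, but it reaches the two bounds by a genuinely different route than the paper. The paper works entirely at the level of eigenvalues: it writes $g_T(S)=K(S)-K(S\cup T)$ as $n\sum_{i\ge 2}\bigl(\lambda_i(\LL(S\cup T))-\lambda_i(\LL(S))\bigr)/\bigl(\lambda_i(\LL(S))\lambda_i(\LL(S\cup T))\bigr)$, telescopes the numerators into the trace difference $2|T\setminus S|$, and sandwiches the denominators between $\lambda_2(\cdot)^2$ and $\lambda_n(\cdot)^2\le n^2$; both the submodularity-ratio and curvature bounds then drop out of dividing the resulting two-sided estimate by itself. You instead keep the exact Sherman--Morrison/Woodbury expressions for the marginal decreases and bound only the quadratic forms $\bb_u^\top\LL_S^{2\dag}\bb_u$ and $\bb_u^\top\LL_S^\dag\bb_u$ by Rayleigh quotients; for $\gamma$ you additionally need the block identity $K(H)-K(T)=n\trace{(\II+B^\top\LL_H^\dag B)^{-1}B^\top\LL_H^{2\dag}B}$ and the operator inequality $(\II+B^\top\LL_H^\dag B)^{-1}\preceq\II$, which the paper avoids. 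Your route buys a strictly stronger intermediate bound $\gamma\ge \lambda_2(\LL)/(\lambda_2(\LL)+2)$ (weakened to the stated $\kh{\lambda_2(\LL)/n}^2$ only at the last step, via $\lambda_2(\LL)\le n-2$), and a cleaner curvature argument in which the denominator factor is handled by monotonicity of effective resistance rather than by eigenvalue bounds; the paper's route is more uniform in that a single two-sided estimate on $g_T(S)$ serves both claims. On the strict inequalities $\gamma<1$ and $\alpha>0$ you are no more complete than the paper itself, which also does not establish them in its displayed proof; your appeal to a separating instance would need that instance to live inside the given graph's candidate set to actually certify the claim for the specific $f$ at hand, so treat that remark as a gap shared with the original rather than one introduced by your argument.
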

\begin{proof}
	Let $Q= (V\times V)\backslash E$ be the candidate set and $S,T$ be any two subsets of candidate edge set $Q$. To begin with, we first derive a lower and upper bound for the marginal benefit function $ g_{T}(S)=f(S\cup T)-f(S)$, respectively.

 On the one hand,
	\begin{align*}
		& g_{T}(S)  =f(S\cup T)-f(S)
		= K(S) - K(S\cup T) \\
		=& n \trace{\LL({S})^\dagger} - n\trace{\LL({S\cup T})^\dagger} \\
		=& \sum_{i=2}^{n}\frac{n}{\lambda_i(\LL(S))} - \frac{n}{\lambda_i(\LL(S\cup T))} \\
		 =& n \sum_{i=2}^{n}\frac{\lambda_i(\LL({S\cup T})) - \lambda_i(\LL({S}))}{\lambda_i(\LL({S}))\lambda_i(\LL({S\cup T}))} \\
		\ge& n\frac{\trace{\LL({S\cup T})}-\trace{\LL({S})}}{\lambda_{n}(\LL({S}))\lambda_{n}(\LL({S\cup T}))} \\
		= & \frac{2n|T\backslash S|}{\lambda_{n}(\LL({S}))\lambda_{n}(\LL({S\cup T}))}.
	\end{align*}
	On the other hand,
	\begin{align*}
	& g_{T}(S)  = n \sum_{i=2}^{n}\frac{\lambda_i(\LL(S\cup T)) - \lambda_i(\LL(S))}{\lambda_i(\LL(S))\lambda_i(\LL(S\cup T))} \\
	&\le n\frac{\trace{\LL(S\cup T)}-\trace{\LL(S)}}{\lambda_2(\LL(S))\lambda_2(\LL(S\cup T))}
	 = \frac{2n|T\backslash S|}{\lambda_2(\LL(S))\lambda_2(\LL(S\cup T))}.
	\end{align*}

Then, we put the above two bounds together and derive the lower bound of the submodular ratio $\gamma$.
	\begin{align*}
		&\frac{\sum_{e\in T\backslash S} g_{\{e\}}(S)}{ g_{T}(S)} \\
		 \ge& \sum_{e\in T\backslash S} \frac{2n|T\backslash S|}{\lambda_{n}(\LL(S))\lambda_{n}(\LL({S\cup \{e\}}))} \cdot \frac{\lambda_2(\LL(S))\lambda_2(\LL(S\cup T))}{2n|T\backslash S|}\\
		\ge& \kh{\frac{ \lambda_2(\LL)}{\lambda_{n}(\LL(Q))}}^2 =  \kh{\frac{ \lambda_2(\LL)}{n}}^2.
	\end{align*}
 The last equality holds since the largest eigenvalue of the Laplacian matrix of a complete graph with $n$ nodes is $n$.


	Similarly, we derive the upper bound of the curvature $\alpha$. Let $j$ be any candidate edge in $S\backslash T$. Then, we have
	\begin{align*}
		&\frac{ g_{\{j\}}(S\cup T \backslash \{j\})}{ g_{\{j\}}(S\backslash \{j\})} \\
  \ge& \frac{2n|T\backslash S|}{\lambda_{n}(\LL({S\cup T \backslash \{j\}}))\lambda_{n}(\LL(S\cup T))}\cdot \frac{\lambda_{1}(\LL({S\backslash \{j\}}))\lambda_{1}(\LL(S))}{2n|T\backslash S|}\\
  \ge& \kh{\frac{ \lambda_2(\LL)}{\lambda_{n}(\LL({Q}))}}^2 = \kh{\frac{ \lambda_2(\LL)}{n}}^2,
	\end{align*}
	which combining with the curvature definition completes the proof of Equation~\eqref{eq:upalpha}.
\end{proof}

Pursuing the enhancement of the theoretical efficiency of this greedy algorithm, a larger approximation ratio $(1-e^{-\gamma\alpha})/\alpha$ is favorable. This entails that a superior $\gamma$ and an inferior $\alpha$ could augment the magnitude of $(1-e^{-\gamma\alpha})/\alpha$. In addition, we discern from Theorem~\ref{lem:sub} that the boundaries of $\gamma$ and $\alpha$ are intrinsically related to $\lambda_2(\LL)$. Predominantly, denser graphs are associated with a larger $\lambda_2(\LL)$~\cite{GhBo06}, suggesting an enlarged lower bound for $\gamma$ and a diminished upper bound for $\alpha$. Thus, denser graphs potentially allow for a tighter approximation ratio bound.


\subsection{Enhanced Greedy Algorithm via Two Approximation Algorithms}

Algorithm~\ref{alg:det} is primarily impeded by the time-consuming step of inverting the Laplacian matrix. While this step guarantees an accurate computation of the marginal decrease of the Kirchhoff index at every stage, it hampers the scalability of the greedy algorithm. Inspired by~\cite{LiZh18}, we introduce two approximation algorithms to expedite the computation of the marginal decrease.

In every iteration, the marginal decrease of the Kirchhoff index $\Delta(e) = K(\calG) -K(\{e\})$ can be rewritten as $\frac{\bb_e^\top \LL^{2\dag} \bb_e}{1+\bb_e^\top \LL^{\dag} \bb_e}$ for any $e \in Q$~\cite{SuShLyDo15,PrKoMe22}.
The key to computing the marginal decrease is the calculation of the effective resistance, $\bb_e^\top \LL^{\dag} \bb_e$, and the square of the biharmonic distance, $\bb_e^\top \LL^{2\dag} \bb_e$, between all non-incident pairs of nodes in $\calG$. The biharmonic distance between two nodes $u$ and $v$, denoted as $d(u,v)$, is defined as $d(u,v)^2 = (\ee_u - \ee_v)^\top \LL^{2\dag} (\ee_u - \ee_v)$~\cite{LiRuFu10}. For simplicity, we refer to $d(u,v)^2$ as the biharmonic distance.

To circumvent matrix inversion, the authors of~\cite{SpSr08} proposed an algorithm, \textsc{ERComp}($\calG$,$Q$,$\epsilon$), which constructs a data structure in $\tilde{O}(m\epsilon^{-2})$ time, where $\tilde{O}(\cdot)$ suppresses ${\rm poly}(\log n)$ factors, and then returns the approximate effective resistance for each node pair incident to the edge $e\in Q$ in $O(\log n\epsilon^{-2})$ time with relative error $\epsilon$. Thus, we need $\tilde{O}(m\epsilon^{-2} + |Q|\epsilon^{-2})$ time to query the effective resistance between each node pair connected by the candidate edge $e \in Q$. Later, the authors of~\cite{YiShLiZh18} utilized this idea to propose another algorithm, \textsc{BDComp}($\calG$,$Q$,$\epsilon$), for evaluating the biharmonic distance. Similarly, the total time cost for querying biharmonic distance between node pairs in set $Q$ is $\tilde{O}(m\epsilon^{-2}+|Q|\epsilon^{-2})$.

Utilizing these two efficient algorithms, we can obtain two sets $\{(e,\tilde{r}(e))|e \in Q\} = \textsc{ERComp}(\calG,Q,\epsilon/2)$ and $\{(e,\tilde{b}(e))|e \in Q\} = \textsc{BDComp}(\calG,Q,\epsilon/2)$, which respectively represent the approximated effective resistance and the biharmonic distance for each node pair incident to the candidate edge $e\in Q$. Consequently, $\tilde{r}(e) \approx_{\epsilon/2} \bb_e^\top \LL^\dag \bb_e$ and $\tilde{b}(e) \approx_{\epsilon/2} \bb_e^\top \LL^{2\dag} \bb_e$ hold for any candidate edge $e\in Q$. Based on these approximations, we can infer that $\tilde{\Delta}(e) = \frac{\tilde{b}(e)}{1+\tilde{r}(e)}$ constitutes an $\epsilon$-approximation of $\Delta(e)$ for any $e \in Q$, that is $\tilde{\Delta}(e) \approx_{\epsilon}\Delta(e)$.

Building on this analysis, we propose an approximation algorithm as described in Algorithm~\ref{alg:JLsol}. The quality of this algorithm is discussed in Theorem~\ref{thm:JLsol}.

\begin{theorem}\label{thm:JLsol}
For any $k>0$ and error $\epsilon$, Algorithm~\ref{alg:JLsol} operates in $\tilde{O}(km\epsilon^{-2}+k|Q|\epsilon^{-2})$ time. It produces a solution $T$ by iteratively selecting $k$ edges, where for the edge chosen in each round, the marginal decrease of this edge is an $\epsilon$-approximation of the maximal marginal decrease of the Kirchhoff index.
\end{theorem}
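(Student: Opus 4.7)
The plan is to verify the two claims independently: (i) the running-time bound $\tilde O(km\epsilon^{-2}+k|Q|\epsilon^{-2})$, and (ii) the per-iteration relative-error guarantee on the marginal decrease. Both rest directly on the subroutines and identities discussed just before the theorem; the real work is to push the $(\epsilon/2)$-relative errors in $\tilde r(e)$ and $\tilde b(e)$ through the ratio $\tilde\Delta(e)=\tilde b(e)/(1+\tilde r(e))$ in a clean multiplicative fashion and then through one $\argmax$ comparison.

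For (i), I would observe that Algorithm~\ref{alg:JLsol} executes $k$ outer iterations and, in each iteration, calls \textsc{ERComp}$(\calG,Q,\epsilon/2)$ and \textsc{BDComp}$(\calG,Q,\epsilon/2)$. By the guarantees cited from~\cite{SpSr08} and~\cite{YiShLiZh18}, each of these calls runs in $\tilde O(m\epsilon^{-2}+|Q|\epsilon^{-2})$ time, including the construction of the relevant data structure plus the $|Q|$ queries. The remaining per-iteration work—sweeping over the $|Q|$ candidates to assemble $\tilde\Delta(e)$, selecting the maximizer, and removing the chosen edge from $Q$ while adding it to $\calG$—adds only $O(|Q|)$ overhead. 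Summing over $k$ iterations yields the claimed bound.

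For (ii), fix an iteration and, for each $e\in Q$, let $r(e)=\bb_e^\top\LL^\dag\bb_e$ and $b(e)=\bb_e^\top\LL^{2\dag}\bb_e$, so $\Delta(e)=b(e)/(1+r(e))$. Since $r(e)\geq 0$, the bound $\tilde r(e)\approx_{\epsilon/2} r(e)$ gives
\begin{equation}
(1-\epsilon/2)(1+r(e)) \leq 1+\tilde r(e) \leq (1+\epsilon/2)(1+r(e)).
\end{equation}
Combining this with $\tilde b(e)\approx_{\epsilon/2} b(e)$ and dividing produces
\begin{equation}
\frac{1-\epsilon/2}{1+\epsilon/2}\,\Delta(e) \leq \tilde\Delta(e) \leq \frac{1+\epsilon/2}{1-\epsilon/2}\,\Delta(e),
\end{equation}
so $\tilde\Delta(e)\approx_{\epsilon}\Delta(e)$ (if a strictly $\epsilon$-tight bound is wanted, one calls the subroutines with accuracy $\epsilon/c$ for a small constant $c$, a change absorbed by the $\tilde O(\cdot)$). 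Let $e^{*}=\argmax_{e\in Q}\Delta(e)$ and let $\hat e=\argmax_{e\in Q}\tilde\Delta(e)$ be the edge chosen by the algorithm. Then
\begin{equation}
\Delta(\hat e)\geq (1-\epsilon)\,\tilde\Delta(\hat e)\geq (1-\epsilon)\,\tilde\Delta(e^{*})\geq (1-\epsilon)^{2}\Delta(e^{*}),
\end{equation}
and trivially $\Delta(\hat e)\leq \Delta(e^{*})$, so the marginal decrease of the chosen edge is an $\epsilon$-approximation of the maximum marginal decrease, as required.

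The only real obstacle is the honest bookkeeping of these multiplicative errors: one must use the non-negativity $r(e),b(e)\geq 0$ to get tight two-sided control on $1+\tilde r(e)$, and chain two successive $\epsilon$-type approximations (first $\tilde\Delta\approx_{\epsilon}\Delta$ and then the $\argmax$ step) without the relative error blowing up. Both effects are only constant-factor blow-ups on $\epsilon$, so they can be absorbed either by the $\tilde O(\cdot)$ in the running-time bound or by a one-time rescaling of the accuracy parameter passed to \textsc{ERComp} and \textsc{BDComp}, and the theorem follows.
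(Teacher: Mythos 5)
Your proposal is correct and follows essentially the same route as the paper, which establishes this theorem through the discussion immediately preceding it: per-iteration calls to \textsc{ERComp} and \textsc{BDComp} each costing $\tilde{O}(m\epsilon^{-2}+|Q|\epsilon^{-2})$, summed over $k$ rounds, plus chaining the two $\epsilon/2$-relative errors through the ratio $\tilde{b}(e)/(1+\tilde{r}(e))$ and the $\argmax$. Your explicit bookkeeping of the constant-factor blow-up in $\epsilon$ (and the remark that it can be absorbed by rescaling the accuracy passed to the subroutines) is in fact more careful than the paper's own treatment, which simply asserts $\tilde{\Delta}(e)\approx_{\epsilon}\Delta(e)$.
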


While the idea that combines Laplacian solvers with Johnson-Lindenstrauss random projection has been exploited in~\cite{BlWaNaWa23,PrKoMe22}, we provide a more detailed analysis of the approximation error. We demonstrate that, with appropriate settings, an accurate approximation of the marginal decrease in the Kirchhoff index can be achieved with a relative error guarantee, offering a stronger theoretical result compared to the conclusions in~\cite{BlWaNaWa23,PrKoMe22}.


Despite the efficiency of the enhanced greedy algorithm, a persistent problem arises because we need to query $|Q|$ different marginal decreases. This process can be notably time-consuming, especially for sparse graphs. Although alternative methods have been proposed to approximate effective resistance~\cite{PeLoYo21,LiYu23,YaTa23,LiLiDa23}, the huge time cost resulted from quadratically many queries remains a challenge and is also a main limitation of the state-of-the-art algorithms~\cite{PrKoMe22}. Consequently, in the subsequent sections, we will explore ways to reduce the number of queries in an attempt to expedite the optimization algorithm.

\begin{small}
\begin{algorithm}[tb]
\begin{small}
	\caption{\textsc{Approx}$(\calG, S_1, S_0, Q, k, \epsilon)$}
    \label{alg:JLsol}
	\Input{
		A connected graph $\calG=(V,E)$;
        a candidate edge set $Q=(V\times V)\backslash E$;
        an integer $1 \leq k \leq |Q|$;
        a real number $\epsilon>0$\\
	}
	\Output{
		$T$: A subset of $T$ with $|T|=k$
	}
Initialize solution $T = \emptyset$ \;
\For{$i = 1$ to $k$}{
 $\{(e,\tilde{r}(e))|e \in Q\} \leftarrow \textsc{ERComp}(\calG,Q,\epsilon/2)$\;
 $\{(e,\tilde{b}(e))|e \in Q\} \leftarrow \textsc{BDComp}(\calG,Q,\epsilon/2)$\;
Select $e_i$ s.t. $e_i \gets \mathrm{arg\, max}_{e \in Q} \tilde{\Delta}(e) =  \frac{\tilde{B}_T(e)}{1+\tilde{R}_T(e)}$\;
Update solution $T \gets T \cup \{ e_i \}$ \;
Update the graph $\calG \gets \calG(V, E \cup \{ e_i \})$ \;
Update the candidate edge set $Q \gets Q\backslash \{e_i\}$
}
\Return $T$.
\end{small}
\end{algorithm}
\end{small}

\section{Nearly Linear Time Algorithm}
In the application of the heuristic greedy algorithm, pinpointing the maximum marginal decrease, $\Delta(e)$, for each $e\in Q$ in every iteration poses a significant challenge, especially when considering that the candidate edge set $Q$ can potentially scale up to $\Omega(n^2)$. This potential size underscores the pressing need for efficient pruning strategies. In the ensuing discussions, we introduce a suite of gradient-based algorithms. Concurrently, we detail methods designed to prune the large original candidate set down to a smaller subset $P$, where $P \subset Q$ and $|P| \ll |Q|$, thereby facilitating fewer query operations. By leveraging these methodologies, we are able to devise efficient algorithms that yield approximate solutions to the central problem.

\subsection{Transforming Decrement Search to Gradient Maximization}
Traditionally, heuristic greedy algorithms select edges based on the maximal marginal decrease in the Kirchhoff index.
While the marginal decrease is commonly used, the gradient or partial derivative of a function has also been explored for its ability to measure edge modification impact, as seen in~\cite{YiShLiZh18,SiBoBaMo18,LiZh18}. However, its application in optimization remains limited, with notable exceptions like~\cite{ZhZhCh21}. Several studies have examined the gradient of the Kirchhoff index~\cite{LiZh18,GhBoSa08,YaMaQiWe18,YiShLiZh18}, which captures the significance of an edge through its potential to alter the Kirchhoff index following modification of the edge within the graph. Inspired by these insights, we adopt the gradient of the Kirchhoff index to characterize the significance of an edge instead of the marginal decrease $\Delta(e)$, and we will provide a new approach for solving this problem.

First, we show how to express the gradient of the Kirchhoff index.

Initially, we extend the simple unweighted graph $\calG=(V,E)$ to the weighted graph $\calH=(V,V\times V,w)$, wherein $w$ is the edge weight function which assigns a weight of 1 to $e\in E$ and 0 otherwise. Thus, the addition of a new edge $e\in (V\times V)\backslash E$ to the graph $\calG$ can be interpreted as altering its edge weight $w(e)$ from 0 to 1.

Subsequently, the Kirchhoff index of graph $\calH$ can be expressed as a function of the weight $w$ of each edge, and the gradient or partial derivative $c(e)$ for each edge $e\in (V\times V)\backslash E$ can be determined by~\cite{GhBoSa08,YiShLiZh18},
\begin{equation*}
c(e)=\frac{\partial K(\calH)}{\partial w(e)} = n \bb_e^\top \LL^{2\dag} \bb_e.
\end{equation*}

Since the gradient of the Kirchhoff index of each edge also measures the importance of each edge, we can naturally propose a gradient-based greedy algorithm by iteratively finding one edge with the largest gradient and adding it to the graph. The algorithm is outlined in Algorithm~\ref{alg:grad} with its time complexity being $O(n^3+kn^2)$, by utilizing a similar analysis as algorithm \textsc{Deter}. Although this gradient-based algorithm \textsc{Grad} does not possess a theoretical quality guarantee, subsequent empirical evidence illustrates comparable efficacy between \textsc{Grad} and \textsc{Deter}.

\begin{small}
\begin{algorithm}[tb]
\begin{small}
	\caption{\textsc{Grad}$(\calG, S_1, S_0, Q, k)$}
    \label{alg:grad}
	\Input{
		A connected graph $\calG=(V,E)$;
        a candidate edge set $Q=(V\times V)\backslash E$;
        an integer $1 \leq k \leq |Q|$\\
	}
	\Output{
		$T$: A subset of $T$ with $|T|=k$
	}
Compute $\LL^{\dag}$\;
Compute $\LL^{2\dag}$\;
Initialize solution $T = \emptyset$ \;
\For{$i = 1$ to $k$}{
Compute $\Delta(e)= K(\calG)-K(\{e\})$ for each $e \in Q$\;
Select $e_i$ s.t. $e_i \gets \mathrm{arg\, max}_{e \in Q} c(e)$\;
Update solution $T \gets T \cup \{ e_i \}$ \;
Update the graph $\calG \gets \calG(V, E \cup \{ e_i \})$ \;
Update $\LL^{\dag} \gets \LL^{\dag} - \frac{ \LL^\dag\bb_e \bb_e^\top\LL^\dag }{1+\bb_e^\top \LL^\dag \bb_e}$\;
Update $\LL^{2\dag} \gets \LL^{2\dag} + \frac{\LL^\dag \bb_e \bb_e^\top\LL^{2\dag}\bb_e\bb_e^\top \LL^\dag}{(1+\bb_e^\top\LL^\dag \bb_e)^2} - \frac{\LL^{2\dag}\bb_e\bb_e^\top \LL^\dag}{1+\bb_e^\top\LL^\dag \bb_e} - \frac{\LL^{\dag}\bb_e\bb_e^\top \LL^{2\dag}}{1+\bb_e^\top\LL^\dag \bb_e}$\;
Update the candidate edge set $Q \gets Q \backslash \{e_i\} $ \;}
\Return $T$.
\end{small}
\end{algorithm}
\end{small}

Although we employ the gradient instead of the marginal decrease of the Kirchhoff index for designing the algorithm, the inherent problem of high time complexity incurred by querying $|Q|$ gradients still exists. Rather than computing the gradient $c(e)$ for all candidate edges $e\in Q$—a process that can be computationally expensive—we will propose a pruning technique using some properties based on geometry in the following sections.




\subsection{Primary Nearly Linear-time Solution}

\subsubsection{Employing Convex Hull Approximation}
For the sake of simplicity and subsequent discussions, we redefine $c(e)=\bb_e^\top \LL^{2\dag}\bb_e$ for any edge $e=(i,j)\in Q$, since we do not change the number of nodes in the graph. Then $c(e) = \bb_e^\top \LL^{2\dag}\bb_e = \smallnorm{\LL^\dag (\ee_i-\ee_j)}_2^2$ can be regarded as the squared Euclidean distance between nodes $i$ and $j$~\cite{GhBoSa08,YiShLiZh18}. Consequently, we introduce a point set $P=\{p_1,p_2,\ldots,p_n\}$ with each point's coordinate specified as $p_i= \LL^\dag \ee_i \in \mathbb{R}^n$. Hence, the task of maximizing the gradient of $c(e)$ is equivalent to identifying the most distant pair of points within the set $P$.

A na\"{i}ve approach to finding the furthest point pair involves computing all pairwise distances, which is computationally costly.
In examining the spatial distribution of a given set of points, we observed that certain points are, relatively speaking, located within the interior of this point set, while others lie on the boundary. It is a general observation that the distances among interior points are shorter than the distances between those on the boundary. Recognizing this, our primary focus is to identify these boundary points, as doing so would facilitate faster computations in subsequent analyses.
This leads us to leverage the concept of the convex hull to discern the boundary points.

\begin{definition}~\cite{Ro70}
Given a set of $n$ points $P=\{p_1,p_2,\ldots,p_n\}$, its convex hull, denoted as $C(P)$, is the intersection of all convex supersets of $P$. Each $p \in P$ for which $p \notin C(P \backslash \{p\})$ is called an extremal point of $C(P)$. The collection of all extremal points constitutes the extremal point set $X(P) \subseteq P$.
\end{definition}

Remarkably, there exists a unique bijective relationship between the convex hull and the extremal point set of any given set of points. More precisely, given the convex hull of a set of points, one can unambiguously determine its corresponding extremal point set, and vice versa.
Such an equivalence has profound implications, since algorithms that compute the convex hull can be immediately leveraged to identify the extremal point set, and conversely, algorithms that determine the extremal point set can be adapted to infer the convex hull.
Based on this equivalence, we introduce a critical property of the convex hull,
as formulated in the following lemma.
\begin{lemma}~\cite{Ro70}
Given a set of $n$ points $P=\{p_1,p_2,\ldots,p_n\}$, let $C(P)$ represent the convex hull of $P$ and $X(P)$ denote the extremal point set of $C(P)$, we have $d(P)=d(C(P))=d(X(P))$, where $d(\cdot)$ signifies the diameter of a set.
\end{lemma}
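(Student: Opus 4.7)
The plan is to establish the chain of inequalities $d(X(P)) \le d(P) \le d(C(P)) \le d(X(P))$, which forces equality throughout. Two of these are immediate: from the inclusions $X(P) \subseteq P \subseteq C(P)$ and the fact that $d(\cdot)$ is monotone with respect to set inclusion (taking the sup of pairwise distances over a larger set can only increase it), we get $d(X(P)) \le d(P) \le d(C(P))$ for free.

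The substantive step is to prove $d(C(P)) \le d(X(P))$. I would do this by invoking the standard fact that a convex function on a compact convex polytope attains its maximum at an extremal point. Concretely, fix any $y \in C(P)$ and consider $f_y(x) = \norm{x-y}_2$; this is a convex function of $x$, so $\max_{x \in C(P)} f_y(x) = \norm{p - y}_2$ for some extremal point $p \in X(P)$. Next, fix this $p$ and consider $g_p(y) = \norm{p - y}_2$; by the same convexity argument, $\max_{y \in C(P)} g_p(y) = \norm{p - q}_2$ for some $q \in X(P)$. Composing these two steps yields
\begin{equation*}
d(C(P)) = \sup_{x,y \in C(P)} \norm{x - y}_2 \le \sup_{p,q \in X(P)} \norm{p - q}_2 = d(X(P)),
\end{equation*}
which closes the loop.

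The main obstacle is the convex-maximum-at-extremal-point fact. I would justify it either by citing the Bauer maximum principle or by giving a direct argument: writing any $x \in C(P)$ as $x = \sum_i \alpha_i p_i$ with $p_i \in X(P)$, $\alpha_i \ge 0$, $\sum_i \alpha_i = 1$, the convexity of the norm gives $\norm{x - y}_2 = \norm{\sum_i \alpha_i(p_i - y)}_2 \le \sum_i \alpha_i \norm{p_i - y}_2 \le \max_i \norm{p_i - y}_2$, and the maximum is realized at some particular $p_i \in X(P)$. An analogous expansion of $y$ produces the second extremal point $q$. Existence of the suprema as maxima is not an issue because $P$ is a finite set of $n$ points, so $C(P)$ is a compact polytope and $X(P)$ is finite.
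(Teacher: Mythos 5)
Your proof is correct. The paper does not actually prove this lemma --- it is stated with a citation to Rockafellar and used as a black box --- so there is no ``paper proof'' to compare against; your argument is the standard one and it is sound. The two easy inequalities $d(X(P)) \le d(P) \le d(C(P))$ follow from monotonicity of the diameter under the inclusions $X(P) \subseteq P \subseteq C(P)$, and the substantive direction $d(C(P)) \le d(X(P))$ is correctly handled by maximizing the convex function $x \mapsto \norm{x-y}_2$ over the polytope in each argument separately. The only ingredient you use implicitly is that every point of $C(P)$ is a convex combination of points of $X(P)$ (Minkowski's theorem for polytopes, which for a finite point set also guarantees $X(P) \subseteq P$); since you state this expansion explicitly in the direct argument and note that compactness and finiteness make all suprema attained, the proof is complete as written.
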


In general, the extremal point set is smaller than the original point set. Leveraging the properties of the extremal point set of the convex hull, we can identify the furthest point pair by computing distances between points within the extremal point set instead of between all point pairs. This dramatically reduces computational time, particularly if the extremal points can be efficiently determined. Below, we elaborate on how to efficiently identify the extremal point set or the convex hull of a given point set.

For a set of points $P=\{p_1,p_2,\ldots,p_n\} \subset \mathbb{R}^d$, the time complexity to find the convex hull of these points is $O(n^{\lfloor d/2\rfloor})$\cite{Ch93}.  
Hence, computing the convex hull in high-dimensional space is costly. Our analysis above yields $n$ points $p_i=\LL^\dag \ee_i \in \mathbb{R}^n$ for $i=1,2,\ldots,n$, signifying that finding the convex hull of these $n$ points incurs a time complexity of $O(n^{\lfloor n/2\rfloor})$. This is impractical due to the excessive time cost. However, an algorithm called \textsc{ApproxConv} can approximate the convex hull with significantly lower time complexity~\cite{AwKaZh18,AwKaZh20,Ka15}. The quality of the algorithm \textsc{ApproxConv} is described in the following lemma:

\begin{lemma}\label{lem:appconv}~\cite{AwKaZh18,AwKaZh20,Ka15}
For a set of points $P=\{p_1,p_2,\ldots,p_n\} \subset \mathbb{R}^d$, and a parameter $\mu\in (0,1)$, the algorithm $\textsc{ApproxConv}(P,\mu)$ generates a subset $\bar{X}(P) \subseteq X(P)$ of $l=|\bar{X}(P)|$ points, where $X(P)$ is the set of extreme points of $S$. The algorithm has a time complexity of $O(nl(d+\mu^{-2}))$ and ensures that the Euclidean distance for any $p\in X(P)$ to the convex hull of the point set $\bar{X}(P)$ does not exceed $\mu d(P)$, where $d(P)$ represents the diameter of the point set $P$.
\end{lemma}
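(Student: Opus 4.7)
The plan is to establish the lemma by analyzing an incremental greedy procedure that maintains a subset $\bar{X} \subseteq X(P)$ and, at each iteration, augments $\bar{X}$ with a point of $P$ whose Euclidean distance to the current $\mathrm{conv}(\bar{X})$ is maximal. The procedure terminates once every $p \in P$ lies within distance $\mu\, d(P)$ of $\mathrm{conv}(\bar{X})$. Because the map $p \mapsto \mathrm{dist}(p,\mathrm{conv}(\bar{X}))$ is convex, its maximum over the compact set $P$ is attained at an extremal point, so the point added in each round can always be chosen from $X(P)$; this immediately yields $\bar{X}(P) \subseteq X(P)$, and the termination criterion delivers the claimed $\mu\, d(P)$-approximation for every extremal point (and in fact for every point of $P$).

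The main computational step per outer iteration is to test, for each candidate $p \in P$, whether $\mathrm{dist}(p,\mathrm{conv}(\bar{X})) > \mu\, d(P)$. I would compute this distance only approximately by applying the Frank--Wolfe (conditional gradient) algorithm to the quadratic projection problem $\min_{x \in \mathrm{conv}(\bar{X})} \|x-p\|_2^2$. Frank--Wolfe converges at rate $O(1/t)$ in the primal gap, so $O(\mu^{-2})$ inner iterations suffice to certify the decision at additive accuracy of order $\mu\, d(P)$; each inner iteration reduces to a single linear maximization over $\mathrm{conv}(\bar{X})$, which by the fundamental theorem of linear programming is a scan over the $|\bar{X}|\le l$ current extremal vertices, costing $O(d)$ per vertex after the standard trick of maintaining running inner products. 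Aggregating across the $n$ candidate points and the $l$ outer iterations yields the advertised $O(n l(d+\mu^{-2}))$ bound.

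The main obstacle will be twofold. First, one must show that the additive error of the approximate projection does not corrupt the decision rule: if some $p \in P$ has true distance noticeably above $\mu\, d(P)$, the approximate test must still flag it, and if no such point exists, the approximate distances must certify termination. This requires carefully calibrating the Frank--Wolfe tolerance against the threshold $\mu\, d(P)$, and using that $\|p - q\|_2 \le d(P)$ bounds the diameter of the feasible region uniformly so the $O(1/t)$ constant is controlled. Second, the bound $|\bar{X}(P)| = l$ is only implicit, depending on the geometry of $P$; one argues that each outer iteration strictly enlarges $\bar{X}$ with a new element of the finite set $X(P)$, so the process halts with some $l \le |X(P)|$. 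Detailed bookkeeping of the Frank--Wolfe constants and the implementation of the incremental linear-maximization oracle I would defer to the cited works \cite{AwKaZh18,AwKaZh20,Ka15}.
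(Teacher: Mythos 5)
This lemma is not proved in the paper at all: it is imported verbatim from the cited works \cite{AwKaZh18,AwKaZh20,Ka15} and used as a black box, so there is no ``paper proof'' to compare against. Judged on its own terms, your sketch does reconstruct the standard greedy farthest-point/Frank--Wolfe construction that underlies those references, and the structural claims (each greedily added point can be taken from $X(P)$ because a convex function attains its maximum over $\mathrm{conv}(P)$ at an extreme point; termination yields the $\mu\,d(P)$ guarantee for all of $P$, hence for $X(P)$) are sound.

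However, two of the things you defer are exactly where the content of the lemma lives, and as written your accounting does not deliver the stated bounds. First, the running time: you charge $O(\mu^{-2})$ Frank--Wolfe iterations per candidate point per outer round, each requiring a linear maximization over the $\le l$ current vertices; even with maintained inner products this is $\Omega(l)$ per inner iteration (or $\Omega(d)$ without them), so the naive total is $O(n\,l^2\,\mu^{-2})$ or $O(n\,l\,d\,\mu^{-2})$, not $O(nl(d+\mu^{-2}))$. Reaching the claimed bound requires the amortization used in the cited algorithms --- warm-starting each point's iterate across outer rounds so that the \emph{total} number of inner iterations per point over the entire run is $O(\mu^{-2})$, with $O(d)$ paid only once per point per round to update Gram entries --- and this is not a bookkeeping detail but the reason the bound is additive in $d$ and $\mu^{-2}$ rather than multiplicative. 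Second, the inclusion $\bar{X}(P)\subseteq X(P)$: once distances to $\mathrm{conv}(\bar{X})$ are computed only approximately, the approximate argmax may be a non-extremal point, so the selected pivot is no longer automatically in $X(P)$; the cited algorithms use a pivot rule that certifies extremality (the chosen point must strictly separate from the current hull), and without that your claimed inclusion can fail. Neither issue is fatal --- both are resolved in \cite{AwKaZh18,Ka15} --- but a proof of the lemma must address them rather than defer them.
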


Based on Lemma~\ref{lem:appconv}, we can conveniently obtain a point set $\bar{X}(P) \subseteq X(P)$ as an approximate extreme point set by applying the algorithm $\textsc{ApproxConv}(P,\mu)$ for any $\mu \in (0,1)$. We thus have:
\begin{equation*}
d(P) = d(X(P)) = d(C(P)) \leq 2\mu d(P) + d(\bar{X}(P)),
\end{equation*}
implying that
\begin{equation}\label{equ:cvapp}
d(P)^2 \approx_{8\mu} d(\bar{X}(P))^2.
\end{equation}
This approximation affords us a significant reduction in computation time, since finding the furthest point pairs from set $\bar{X}(P)$ now only needs $O(l^2n)$ time complexity.


\subsubsection{Coordinate Projection}

Based on Lemma~\ref{lem:appconv} and Equation~\eqref{equ:cvapp}, the \textsc{ApproxConv} algorithm can effectively generate an approximation of the extremal point set for point set $P$. Nonetheless, its application still requires a time complexity of $\Omega(n^2)$, since the dimension of each node is $n$.
Given that the exclusion of points from set $P$ is not feasible, optimizing the \textsc{ApproxConv} algorithm necessitates a reduction in the dimensionality of each point. In pursuit of enhanced computational efficiency, we leverage the Johnson-Lindenstrauss (JL) lemma~\cite{JoLi84,Ac01}. This pivotal lemma provides a means to lower the dimensionality without significantly changing the pairwise distances between the points. Therefore, we can reduce the processing time of \textsc{ApproxConv} greatly.


First, we introduce the Johnson-Lindenstrauss Lemma~\cite{Ac01}.

\begin{lemma}
\label{lemma:JL}
	Given fixed vectors $\vvv_1,\vvv_2,\ldots,\vvv_n\in \mathbb{R}^d$ and
	$\beta>0$, let
 $\QQ_{t\times d}$, $t\ge 24\log n/\beta^2$, be a matrix, with each entry being either $1/\sqrt{t}$ or $-1/\sqrt{t}$ with the same probability $1/2$.
	 Then, with a probability of at least $1-1/n$,
	\[(1-\beta)\|\vvv_i-\vvv_j\|^2_2\le \|\QQ \vvv_i-\QQ \vvv_j\|^2_2\le
	(1+\beta)\|\vvv_i-\vvv_j\|^2_2\] for all pairs $i,j\le n$.
\end{lemma}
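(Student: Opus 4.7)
The plan is to follow the classical Johnson--Lindenstrauss strategy: establish the distance-preserving property for a single fixed pair of vectors, and then extend it to all $\binom{n}{2}$ pairs via a union bound. First I would fix an arbitrary pair $i \neq j$ and let $\vvv = \vvv_i - \vvv_j$. Since $\QQ$ is linear, we have $\QQ\vvv_i - \QQ\vvv_j = \QQ\vvv$, so it suffices to show that $\|\QQ\vvv\|_2^2$ is tightly concentrated around $\|\vvv\|_2^2$, and the general claim will follow by taking $\beta$-approximation simultaneously over all such differences.

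Next I would compute the expectation to show the projection is isometric in mean. Writing the $k$-th coordinate as $(\QQ\vvv)_k = t^{-1/2} \sum_{l=1}^d \sigma_{kl} v_l$, where the $\sigma_{kl}\in\{-1,+1\}$ are i.i.d.\ Rademacher, the cross terms $\sigma_{kl}\sigma_{kl'}$ for $l\neq l'$ vanish in expectation while the diagonal terms give $\sigma_{kl}^2 = 1$. Hence $\mathbb{E}[(\QQ\vvv)_k^2] = \|\vvv\|_2^2/t$, and summing over the $t$ independent rows yields $\mathbb{E}[\|\QQ\vvv\|_2^2] = \|\vvv\|_2^2$.

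The main step, and the principal obstacle, will be the concentration inequality. I would invoke the standard Rademacher--chaos moment generating function bound (Achlioptas-style), which gives for any $\beta \in (0,1)$
\[
\Pr\!\left[\,\bigl|\|\QQ\vvv\|_2^2 - \|\vvv\|_2^2\bigr| > \beta\, \|\vvv\|_2^2\,\right] \le 2\exp\!\left(-\frac{t\beta^2}{8}\right).
\]
The delicate part is verifying the constant $8$ in the exponent: this requires bounding $\mathbb{E}[\exp(s(\|\QQ\vvv\|_2^2 - \|\vvv\|_2^2))]$ using independence across the $t$ rows, expanding the per-row MGF through Taylor series or the inequality $\cosh(x) \le e^{x^2/2}$, and finally optimizing $s$ via Markov's inequality. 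Plugging in the hypothesis $t \ge 24\log n/\beta^2$ makes the right-hand side at most $2\exp(-3\log n) = 2/n^3$.

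Finally I would close by a union bound over the at most $\binom{n}{2} < n^2/2$ pairs. The total failure probability becomes at most $(n^2/2)\cdot(2/n^3) = 1/n$, so with probability at least $1 - 1/n$ the two-sided bound $(1-\beta)\|\vvv_i-\vvv_j\|_2^2 \le \|\QQ\vvv_i-\QQ\vvv_j\|_2^2 \le (1+\beta)\|\vvv_i-\vvv_j\|_2^2$ holds simultaneously for every pair $i,j \le n$. Everything beyond the Rademacher MGF calculation is routine; that calculation is the only place where real work is needed, and it is standard enough that one can simply cite Achlioptas' original analysis if a self-contained derivation is not desired.
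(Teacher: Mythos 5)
Your proposal is correct and is exactly the standard Achlioptas-style argument; the paper itself does not prove this lemma but simply imports it from the cited references (Johnson--Lindenstrauss and Achlioptas), so your sketch is in effect a reconstruction of the proof in that source: reduce to a single difference vector by linearity, check $\mathbb{E}\|\QQ\vvv\|_2^2=\|\vvv\|_2^2$, apply the Rademacher MGF concentration bound per pair, and union-bound over the $\binom{n}{2}$ pairs. The only point to be careful about is the constant in the exponent: the bound from Achlioptas is $2\exp\bigl(-\tfrac{t}{2}(\beta^2/2-\beta^3/3)\bigr)$, and your claimed $2\exp(-t\beta^2/8)$ follows from it only when $\beta\le 3/4$ (since one needs $\beta^2/4-\beta^3/6\ge\beta^2/8$); for the arithmetic $t\ge 24\log n/\beta^2\Rightarrow 2/n^3$ per pair and hence total failure probability $1/n$ to go through as you wrote it, you should either restrict $\beta$ to such a range or track the exact exponent, which is a routine adjustment and does not affect the validity of the approach.
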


Lemma~\ref{lemma:JL} allows the coordinates of a point $p_i$ to be projected from an $n$-dimensional space to a lower-dimensional one, approximately preserving the distances between each pair of points. Let us denote a new point set $\tilde{P}=\{\tilde{p}_1,\tilde{p}_2,\ldots,\tilde{p}_n\}$, where each point $\tilde{p}_i$ possesses a new coordinate $\tilde{p}_i=\QQ\LL^{\dag}\ee_i \in \mathbb{R}^t$ for $i=1,2,\ldots,n$.
Then, for any $i,j$, we have $\norm{p_i-p_j}_2^2 \approx_{\beta} \norm{\tilde{p}_i-\tilde{p}_j}_2^2$, which leads to
\begin{equation}\label{equ:d_JL}
d(P)^2 \approx_{\beta} d(\tilde{P})^2.
\end{equation}

Subsequently, we can apply the \textsc{ApproxConv} algorithm on the point set $\tilde{P}$ with an error $\mu$, which in turn generates a subset $\tilde{X}(\tilde{P})$ of the point set $\tilde{P}$ in $O(n\tilde{l}(\log n/\beta^2+\mu^2))$ time complexity, where $\tilde{l}=|\tilde{X}(\tilde{P})|$. According to Equation~\eqref{equ:cvapp}, this subset meets the condition
\begin{equation}\label{equ:d_JLcov}
d(\tilde{P})^2 \approx_{8\mu} d(\tilde{X}(\tilde{P}))^2.
\end{equation}
By integrating Equations~\eqref{equ:d_JL} and~\eqref{equ:d_JLcov}, we have
\begin{equation}
d(\tilde{X}(\tilde{P}))^2 \approx_{\beta+8\mu} d(P)^2,
\end{equation}
implying that the diameter of the point set $P$ can be approximated by projecting the coordinates of points and approximating the convex hull of the resulting points.
After determining the extremal point set $\tilde{X}(\tilde{P})$, we can now find the furthest point pair from set $\tilde{X}(\tilde{P})$ in $O(\tilde{l}^2\log n/\beta^2)$ time.

\subsubsection{Coordinate Computation}
Given a set of points, the above analysis shows us how to approximate its diameter efficiently. However, there still exists a crucial consideration of computation of the projected coordinate. The calculation of $\QQ\LL^{\dag}\ee_i$ requires matrix inversion, an operation that typically demands $O(n^3)$ time for directly computing the pseudoinverse of the Laplacian matrix $\LL$. To circumvent this expensive operation, we utilize the fast SDD linear system solver~\cite{SpTe14,CoKyMiPaJaPeRaXu14}, as detailed below.

\begin{lemma}\label{lem:solver}~\cite{SpTe14,CoKyMiPaJaPeRaXu14}
\label{solve-lemma}
Let $\TT \in \mathbb{R}^{n \times n}$ be a semidefinite positive symmetric matrix with nonzero $m$ entries, $\bb \in \mathbb{R}^n$ a vector, and $\gamma > 0$ an error parameter. There exists a solver, denoted by $\aaa = \SDDMSolver(\TT, \bb, \gamma)$, which yields a vector $\aaa \in \mathbb{R}^n$ satisfying $\norm{\aaa - \TT^{-1} \bb}_{\TT} \leq \gamma \norm{\TT^{-1} \bb}_{\TT}$. The expected runtime of this solver is $\tilde{O}(m)$, where $\tilde{O}(\cdot)$ suppresses ${\rm poly}(\log n)$ factors.
\end{lemma}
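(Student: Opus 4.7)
The plan is to sketch the nearly-linear-time SDD solver construction originating with Spielman--Teng and refined by subsequent work; it combines iterative linear algebra with spectral graph sparsification. First I would reduce the general SDD problem to a Laplacian problem via Gremban's double-cover reduction: solving $\TT\aaa = \bb$ for an SDD matrix $\TT$ is equivalent to solving a Laplacian system of at most $2n$ variables with $O(m)$ nonzeros. Thus, without loss of generality I may assume $\TT = \LL$ is the Laplacian of a weighted graph, and all subsequent effort goes into the Laplacian case.

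The core idea is preconditioned iterative refinement. I would construct a preconditioner $\BB$ satisfying (i) a spectral relation $\BB \preceq \LL \preceq \kappa \BB$ for a target ratio $\kappa$, and (ii) the property that linear solves in $\BB$ are significantly cheaper than in $\LL$. The preconditioner is built in two stages: compute a low-stretch spanning tree $\mathcal{T}$ of $\LL$ (whose total weighted stretch of off-tree edges is $\tilde{O}(m)$), and then sample non-tree edges with probability proportional to their stretch, rescaled so that the sample is unbiased. The result is an ultrasparsifier consisting of the $n{-}1$ tree edges plus only $\tilde{O}(n/\kappa)$ sampled off-tree edges, with the spectral bound holding with high probability by a matrix Chernoff concentration argument. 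Preconditioned Chebyshev iteration then drives the iterate to relative $\gamma$-accuracy in the $\TT$-norm in $O(\sqrt{\kappa}\log(1/\gamma))$ steps, each costing $\tilde{O}(m)$ work for the matrix--vector multiplication by $\LL$ plus one linear solve in $\BB$.

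To make solves in $\BB$ cheap, I would exploit its sparsity: greedy elimination of degree-one and degree-two vertices collapses $\BB$ to a Laplacian on only $\tilde{O}(n/\kappa)$ vertices, on which the entire construction is applied recursively. Choosing $\kappa = \mathrm{polylog}(n)$, the recurrence $T(n) = O(\sqrt{\kappa}\log(1/\gamma)) \cdot T(n/\kappa) + \tilde{O}(m)$ solves to $T(n) = \tilde{O}(m)$, absorbing the $\log(1/\gamma)$ factor into the $\tilde{O}$ notation. This yields the claimed bound on expected runtime in Lemma~\ref{lem:solver}.

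The main obstacle is the combinatorial--spectral part: (a) constructing a low-stretch spanning tree in $\tilde{O}(m)$ time with total stretch $\tilde{O}(m)$, which itself relies on recursive hierarchical graph decompositions; (b) establishing the ultrasparsifier spectral bound with the right dependence on $\kappa$ via matrix concentration, which requires careful control of the variance of the sampling procedure; and (c) ensuring the construction itself fits inside the recursive time budget so it does not dominate the final solve. The refined analysis of Cohen--Kyng--Miller--Pachocki--Peng--Rao--Xu streamlines the recursion by interleaving sparsification with a sparse-Cholesky-like elimination, shaving several logarithmic factors but preserving the qualitative structure of the argument above.
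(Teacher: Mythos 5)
The paper offers no proof of this lemma: it is imported verbatim as a black-box result from the cited solver literature (Spielman--Teng and Cohen et al.), so there is nothing internal to compare your argument against. Your sketch is a faithful outline of the actual construction in those references---Gremban reduction, low-stretch-tree-based ultrasparsifiers, preconditioned Chebyshev with recursive greedy elimination---and correctly (indeed, more correctly than the lemma's own wording) notes that the matrix must be SDD rather than merely positive semidefinite for the reduction to Laplacians to apply.
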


Building on this solver, we can approximate the coordinates of each point $\tilde{p}_i$. Let $\XX=\QQ\LL^\dag$, and $\hat{\XX}_{j,:}=\SDDMSolver(\LL,\QQ_{j,:},\gamma)$. Then, $\hat{p}_i=\hat{\XX}\ee_i$ becomes an approximation of the projected coordinate of point $\tilde{p}_i$ in an $O(\log n)$-dimensional space, calculated by the Laplacian solver. We now aim to show that the distances between the approximate coordinates retain an accurate approximation of the distances between the original coordinates. Lemma~\ref{lem:solvecomp} verifies that this approximation can be achieved efficiently using \textsc{Solve} (see Lemma~\ref{solve-lemma}).

\begin{lemma}\label{lem:solvecomp}
Assume that
\begin{equation}\label{equ:lem1}
\norm{\XX(\ee_u-\ee_v)} \approx_{\beta} \norm{\XX(\ee_u-\ee_v)}
\end{equation}
holds for every pair $u,v\in V$, and
\begin{equation}\label{equ:lem2}
\smallnorm{\hat{\XX}_{j,:}-\XX_{j,:}}_{\LL} \leq \gamma \smallnorm{\XX_{ j,:}}_{\LL},
\end{equation}
where $\gamma \leq \frac{\delta}{3n} \sqrt{\frac{6(1-\beta)}{n(n^2-1)(1+\beta)}}$, holds for any $j=1,2,\ldots,t$. Then,
\begin{equation}\label{equ:lem3}
\smallnorm{\XX(\ee_u-\ee_v)}^2_2 \approx_{\delta} \smallnorm{\hat{\XX}(\ee_u-\ee_v)}_2^2
\end{equation}
holds for every pair $u,v\in V$.
\end{lemma}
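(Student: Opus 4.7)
Writing $\bb_e = \ee_u - \ee_v$, the plan is to decompose
\[
\smallnorm{\hat{\XX} \bb_e}_2^2 = \smallnorm{\XX \bb_e}_2^2 + 2 (\XX \bb_e)^\top (\hat{\XX} - \XX) \bb_e + \smallnorm{(\hat{\XX} - \XX) \bb_e}_2^2,
\]
and reduce the claim to the single inequality $\smallnorm{(\hat{\XX} - \XX) \bb_e}_2 \le (\delta/3) \smallnorm{\XX \bb_e}_2$. Once that is established, controlling the cross term by Cauchy--Schwarz shows that the total deviation is at most $\delta \smallnorm{\XX \bb_e}_2^2$, which is exactly the $(1\pm\delta)$ multiplicative approximation stated in Equation~\eqref{equ:lem3}.

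For the main inequality I work row by row. Set $\yy_j = \XX_{j,:}^\top = \LL^\dag \QQ_{j,:}^\top$ and $\hat{\yy}_j = \hat{\XX}_{j,:}^\top$, so that $((\hat{\XX} - \XX)\bb_e)_j = (\hat{\yy}_j - \yy_j)^\top \bb_e$. Because $\bb_e \perp \mathbf{1}$, we have $\LL \LL^\dag \bb_e = \bb_e$, which lets me rewrite this inner product as $(\hat{\yy}_j - \yy_j)^\top \LL (\LL^\dag \bb_e)$; a Cauchy--Schwarz step in the $\LL$-inner product gives $|(\hat{\yy}_j - \yy_j)^\top \bb_e|^2 \le \smallnorm{\hat{\yy}_j - \yy_j}_\LL^2 \, \smallnorm{\LL^\dag \bb_e}_\LL^2 = \smallnorm{\hat{\yy}_j - \yy_j}_\LL^2 \cdot r_{uv}$. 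Squaring Equation~\eqref{equ:lem2}, summing over $j$, and using the identity $\yy_j^\top \LL \yy_j = \QQ_{j,:} \LL^\dag \QQ_{j,:}^\top$ (which follows from $\LL^\dag \LL \LL^\dag = \LL^\dag$) then yields
\[
\smallnorm{(\hat{\XX} - \XX) \bb_e}_2^2 \;\le\; \gamma^2 \, r_{uv} \, \trace{\QQ \LL^\dag \QQ^\top}.
\]

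Finally, I convert this absolute bound into a relative one. Equation~\eqref{equ:lem1} together with Cauchy--Schwarz applied to the spectral expansion of $\LL^\dag$ yields $r_{uv}^2 \le \smallnorm{\bb_e}_2^2 \smallnorm{\LL^\dag \bb_e}_2^2 = 2\smallnorm{\LL^\dag \bb_e}_2^2$, hence $\smallnorm{\XX \bb_e}_2^2 \ge (1-\beta) r_{uv}^2 / 2$. Combining this with $r_{uv} \ge 2/\lambda_n(\LL) \ge 2/n$, the trace estimate $\trace{\QQ \LL^\dag \QQ^\top} \le \smallnorm{\QQ}_F^2 \smallnorm{\LL^\dag}_{op}$, and Fiedler's bound $\lambda_2(\LL) = \Omega(1/n^2)$ for connected graphs, the quotient $\smallnorm{(\hat{\XX}-\XX)\bb_e}_2^2 / \smallnorm{\XX \bb_e}_2^2$ is controlled by $C \gamma^2 n(n^2-1)(1+\beta)/(1-\beta)$ for an absolute constant $C$, and the ceiling on $\gamma$ stated in the lemma is calibrated to make this at most $\delta^2/9$. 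The delicate step is this bookkeeping: the factor $n(n^2-1)$ in the prescription for $\gamma$ tracks a worst-case lower bound on effective resistance combined with a worst-case spectral-gap lower bound, and the $(1\pm\beta)$ constants must be aligned carefully so that the choice of $\gamma$ works uniformly over all $u,v$ and all connected graphs.
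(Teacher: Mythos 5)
Your argument is correct and reaches the stated conclusion, but it takes a genuinely different route from the paper's (omitted, path-based) proof for the key step of bounding the perturbation $\smallnorm{(\hat{\XX}-\XX)(\ee_u-\ee_v)}_2$. The paper writes $\ee_u-\ee_v$ as a telescoping sum along a simple graph path between $u$ and $v$, applies the triangle and Cauchy--Schwarz inequalities over the path's edges, and enlarges to the full edge set to get the uniform bound $\smallnorm{(\hat{\XX}-\XX)\bb_e}_2\le\sqrt{n}\,\gamma\norm{\XX\BB^{\top}}_F\le\gamma\sqrt{n(n^2-1)(1+\beta)/6}$ (where $\BB$ is the incidence matrix, so $\norm{\XX\BB^\top}_F^2=\trace{\QQ\LL^{\dag}\QQ^{\top}}$), which it pairs with the uniform lower bound $\smallnorm{\XX\bb_e}_2^2\ge(1-\beta)n^{-2}$. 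Your projection identity $\bb_e=\LL\LL^{\dag}\bb_e$ plus Cauchy--Schwarz in the $\LL$-inner product is cleaner and buys a pair-adaptive bound $\gamma^2 r_{uv}\trace{\QQ\LL^{\dag}\QQ^{\top}}$ whose $r_{uv}$ factor partially cancels against your lower bound $(1-\beta)r_{uv}^2/2$; both proofs implicitly read Equation~\eqref{equ:lem1} as comparing $\smallnorm{\XX\bb_e}$ with $\smallnorm{\LL^{\dag}\bb_e}$ (the statement as printed has $\XX$ on both sides, a typo). The one place your bookkeeping is looser than necessary is the trace estimate: $\trace{\QQ\LL^{\dag}\QQ^{\top}}\le\norm{\QQ}_F^2\norm{\LL^{\dag}}_{2}\approx n/\lambda_2(\LL)$ costs an extra factor of roughly $n$ over the bound the lemma's constant is actually calibrated to, namely $\trace{\QQ\LL^{\dag}\QQ^{\top}}=\sum_{(a,b)\in E}\smallnorm{\XX(\ee_a-\ee_b)}_2^2\le(1+\beta)\trace{\LL^{\dag}}\le(1+\beta)(n^2-1)/6$ (using $K(\calG)\le n(n^2-1)/6$, attained by the path). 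Your version still closes because the prescribed $\gamma$ carries quadratic headroom in $n$ along your chain, so the extra factor of $n$ (and the Fiedler constant) is absorbed for $n$ beyond a small constant; but your intermediate claim that the quotient is $C\gamma^2 n(n^2-1)(1+\beta)/(1-\beta)$ with $C$ an absolute constant is off by that factor of $n$, and swapping in the Kirchhoff-index trace bound removes both the slack and the spurious dependence on $\lambda_2(\LL)$.
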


We omit the proof since it is similar to that in~\cite{YiShLiZh18}.

Lemma~\ref{lem:solvecomp} allows us to approximate the projected coordinates of each point in $\tilde{O}(m/\beta^2)$ and then constructs a set of points $\hat{P}=\{\hat{p}_1,\hat{p}_2,\ldots,\hat{p}_n\}$ with approximate projected coordinates being $\hat{p}_i=\XX\ee_i \in \mathbb{R}^t$. Consequently, we obtain
\begin{equation}
d(\tilde{P})^2 \approx_{\delta} d(\hat{P})^2.
\end{equation}

By applying the algorithm \textsc{ApproxConv} with the approximated projected point set $\hat{P}$ computed in $\tilde{O}(m/\beta^2)$, the algorithm returns a subset $\hat{X}(\hat{P})$ of point set $\hat{P}$ in $O(n\hat{l}(\log n/\beta^2+\mu^2))$, where $\hat{l}=|\hat{X}(\hat{P})|$, satisfying
\begin{align}
 d(\hat{X}(\hat{P}))^2  \approx_{8\mu+\beta+\delta} d(P)^2.
\end{align}
Since the furthest point pair of set $\hat{X}(\hat{P})$ can be computed in $O(\hat{l}^2 \log n /\beta^2)$, we now can efficiently determine the furthest point pairs.

\subsubsection{Approximation Algorithm}

Equipped with the convex hull approximation and coordinate projection, we now delineate an efficient algorithm for Problem~\ref{pro:kim}, as detailed in Algorithm~\ref{alg:grad}.

The structure of Algorithm~\ref{alg:grad} consists of $k$ iterative rounds (Lines 5-13), each focusing on the selection of an individual edge. The steps within each round are systematically organized as follows: (1) Random matrix generation: A random matrix $\QQ_{t\times n}$ is constructed in $O(n\log n\epsilon^{-2})$ time (Line 6). (2) Coordinate approximation: Utilizing the JL lemma and the Laplacian solver, the approximated projected coordinates of points $\hat{p}_1,\hat{p}_2,\ldots,\hat{p}_n$ are computed in $\tilde{O}(m \epsilon^{-2})$ time (Lines 7-9). (3) Extremal point determination: The extremal point set of the point set $\hat{P}$ is computed in $O(n\log n \epsilon^{-2} \hat{l})$ time (Line 10). (4) Maximal distance finding: The maximal distance between the points within the extremal point set is identified in $O(\hat{l}^2 \log n\epsilon^{-2})$ time (Line 11). (5) Solution and graph update: Finally, the solution and graph are updated based on the obtained results (Lines 12-13).

Hence, the total time complexity of Algorithm~\ref{alg:fastgrad} is $\tilde{O}(km \hat{l}/\epsilon^2)$. The following theorem rigorously characterizes the quality of Algorithm~\ref{alg:fastgrad}.

\begin{theorem}\label{Performance}
Given any positive integer $k$ and an error parameter $\epsilon \in (0,1)$, the run-time of Algorithm~\ref{alg:fastgrad} is $\tilde{O}(km \hat{l}/\epsilon^2)$. Through the iterative selection of $k$ edges, the algorithm produces a solution $T$. Furthermore, the gradient of the edge selected in each round constitutes an $\epsilon$-approximation of the maximal gradient of the objective function.
\end{theorem}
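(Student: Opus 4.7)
The plan is to split the theorem into two independent claims and verify each by composing results already established in the excerpt: (i) the total running time $\tilde{O}(km\hat{l}/\epsilon^2)$ and (ii) the per-round $\epsilon$-approximation of the maximum gradient. Both pieces have essentially modular proofs because every subroutine---random projection, the SDD solver, convex-hull approximation, and furthest-pair enumeration---already comes with a clean guarantee (Lemmas~\ref{lemma:JL}, \ref{lem:solver}, \ref{lem:solvecomp} and~\ref{lem:appconv}), so the task is mainly bookkeeping and parameter selection.

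For the running time, I would fix one iteration of the outer loop and sum the costs of the steps itemised before the theorem. Sampling $\QQ_{t\times n}$ with $t=\Theta(\log n/\epsilon^2)$ takes $O(n\log n/\epsilon^2)$; calling \SDDMSolver\ once per row of $\QQ$ to build $\hat{\XX}$, and hence the projected point set $\hat{P}$, costs $\tilde{O}(tm)=\tilde{O}(m/\epsilon^2)$ by Lemma~\ref{lem:solver}; running \textsc{ApproxConv} on $n$ points of dimension $t$ with accuracy $\mu=\Theta(\epsilon)$ costs $O(n\hat{l}(t+\mu^{-2}))=\tilde{O}(n\hat{l}/\epsilon^2)$ by Lemma~\ref{lem:appconv}; and enumerating the $O(\hat{l}^2)$ pairs of extremal points in dimension $t$ costs $\tilde{O}(\hat{l}^2/\epsilon^2)$. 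Under the standing assumption $m\ll n^2$ and since $\hat{l}\le n$, the dominant term per iteration is $\tilde{O}(m\hat{l}/\epsilon^2)$, and multiplying by $k$ delivers the stated bound. A union bound over the $k$ iterations for the JL and solver failure events costs at most an extra $\log k$ factor, which is absorbed in $\tilde{O}(\cdot)$.

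For the approximation guarantee, I would set $\beta=\epsilon/3$, $\delta=\epsilon/3$ and $\mu=\epsilon/24$, and set the solver tolerance $\gamma$ as required by Lemma~\ref{lem:solvecomp}. Composing the three approximations already written out in the excerpt yields
\begin{equation*}
d(P)^2 \Approx{\beta} d(\tilde{P})^2 \Approx{\delta} d(\hat{P})^2 \Approx{8\mu} d(\hat{X}(\hat{P}))^2,
\end{equation*}
and multiplying the three $(1\pm\cdot)$ factors gives a joint distortion of at most $\beta+\delta+8\mu\le\epsilon$ (after absorbing the quadratic cross terms into slightly smaller constants). Because $c(e)=\|p_i-p_j\|_2^2$ for $e=(i,j)$ and $\max_{e\in Q} c(e)=d(P)^2$, the edge $e_i$ selected in Line~11 by taking the furthest pair inside $\hat{X}(\hat{P})$ has true squared distance in $P$ satisfying $c(e_i)\ge(1-\epsilon)\max_{e\in Q} c(e)$, which is precisely the $\epsilon$-approximation claim.

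The main obstacle I expect is ensuring that the argmax step on the approximated extremal set is actually a witness to the composed bound, rather than merely that the diameters are close: the furthest pair inside $\hat{X}(\hat{P})$ need not correspond to the globally optimal candidate edge in $P$, so one must argue that \emph{any} pair that nearly realises $d(\hat{X}(\hat{P}))^2$ pulls back, through the inverse chain of the three lemmas, to an edge whose original gradient $c(\cdot)$ is within a $(1-\epsilon)$ factor of the maximum. A secondary subtlety is propagating a multiplicative (not additive) error through squaring in Lemma~\ref{lem:solvecomp}, which is already handled in that lemma and only needs to be invoked with the appropriate $\gamma$. Once these points are verified, the theorem follows by straightforward concatenation.
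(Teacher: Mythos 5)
Your proposal matches the paper's own (implicit) argument: the paper establishes this theorem exactly through the itemized per-iteration cost accounting preceding the statement and by chaining $d(P)^2 \Approx{\beta} d(\tilde{P})^2 \Approx{\delta} d(\hat{P})^2 \Approx{8\mu} d(\hat{X}(\hat{P}))^2$ with $\mu=\epsilon/24$, $\beta=\delta=\epsilon/3$. The witness-pullback subtlety you flag is genuine --- the paper glosses over it --- and your resolution via the per-pair JL/solver guarantees (rather than only comparing diameters) is the right way to close it, at the cost of a constant-factor adjustment in the parameter choices.
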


\begin{small}
\begin{algorithm}[tb]
\begin{small}
	\caption{\textsc{FastGrad}$(\calG, \epsilon,k)$}
    \label{alg:fastgrad}
	\Input{
		A connected graph $\calG=(V,E)$; a real number $\epsilon>0$; an integer $k$\\
	}
	\Output{
		$T$: A subset of $(V \times V)\backslash E$ satisfying $|T|=k$
	}
 Initialize solution $T = \emptyset$\;
 Set $\mu = \epsilon/24$, $\beta = \epsilon/3$, $\delta = \epsilon/3$\;
 Set $\gamma =  \frac{\delta}{3n} \sqrt{\frac{6(1-\beta)}{n(n^2-1)(1+\beta)}}$ \;
 $t = \lceil \log(n)/\beta^2 \rceil$\;
 $\LL \leftarrow $ the Laplacian matrix of graph $\calG$\;
 \For{$i=1$ to $k$}{
 Generate random Gaussian matrices $\QQ_{t\times n}$\;
 \For{$j=1$ to $t$}{
    $\XX_{j,:}$ = \textsc{Solve} ($\LL,\QQ_{j,:},\gamma$)
    }
    $\hat{P} = \{\hat{p}_1,\hat{p}_2,\ldots,\hat{p}_n\}$ with the coordinate of each point being $\XX_{:,1},\XX_{:,2},\ldots,\XX_{:,n}$\;
 $H = \textsc{ApproxConv}(\hat{P},\mu)$\;
 $(x,y) \leftarrow \argmax_{u,v\in H} \norm{\XX_{:,u}-\XX_{:,v}}_2^2$\;
 Update solution $T \leftarrow T \cup \{(x,y)\}$\;
 Update the graph $\calG \leftarrow \calG(V,E\cup \{(x,y)\})$\;
 }

\Return $T$.
\end{small}
\end{algorithm}
\end{small}

\subsection{Faster Algorithm by Pre-pruning and Update}

While Algorithm~\ref{alg:fastgrad} shows significant efficiency, we identify certain redundancies. The most time-consuming steps in each iteration are step (2), computing the approximated projected coordinates, and step (3), invoking the \textsc{ApproxConv} routine. In what follows, we introduce improvements to accelerate the algorithm by addressing these two aspects.

\subsubsection{Fast Update of the Coordinate}

Repeatedly calling the Laplacian solver $t=\lceil\log n/\epsilon^{2}\rceil$ times to compute the approximated projected coordinates in each iteration is a significant computational overhead. We recognize that after selecting an edge in each iteration, we can efficiently update the coordinate of each node by employing the Sherman-Morrison formula~\cite{Me73} rather than recomputing them.

For each node $i\in V$, its projected coordinate is $\QQ \LL^\dag \ee_i$. After adding edge $e=(x,y)$ to the graph, the coordinate updates to $\QQ (\LL+\bb_e\bb_e^\top)^\dag \ee_i = \QQ\LL^\dag \ee_i-\frac{\QQ\LL^\dag\bb_e \bb_e^\top \LL^\dag\ee_i}{\bb_e^\top \LL^\dag \bb_e}$ according to Sherman-Morrison formula. Instead of re-approximating the projected coordinates, we only need to compute $\xx=\bb_e^\top \LL^\dag$ once, then we can update the coordinate of each point to be $\QQ\LL^\dag \ee_i-\frac{\QQ\xx \xx^\top\ee_i}{\bb_e^\top \xx}$ for each $i\in V$ by matrix manipulation. Direct computation of vector $\xx=\bb_e^\top \LL^\dag$ relies on matrix inversion and is time-consuming. So we use the Laplacian solver a single time in $\tilde{O}(m)$ to get an approximation of the vector $\xx$ by $\tilde{\xx}=\textsc{Solve}(\LL,\bb_e,\delta_0)$, where $\delta_0$ can be sufficiently small for accuracy. Consequently, the coordinates of the nodes can be then updated efficiently in $O(n\log n/\epsilon^{-2})$ time, rather than using Laplacian solver for $\lceil\log n\epsilon^{-2}\rceil$ times.

\begin{remark}
    Although the use of the Laplacian solver may lead to some inaccuracy, it has very accurate quality in practice since $\delta_0$ can be very small. Also, we only use the Laplacian solver once in each iteration. So we believe that this update process does not lead to more errors.
\end{remark}





\subsubsection{Preprocessing Techniques for Extremal Point Selection}

The time complexity of approximating the extremal point set of the convex hull is correlated with the number of nodes in the network. As the extremal point set consists of a small-sized subset of the total $n$ points, our aim is to prune as many non-extremal points as possible before calling the \textsc{ApproxConv} routine, thereby saving time. It seems infeasible to directly remove some points from the point set. However, by combining the information of the graph structure, we will prune as many internal points as possible.

An essential property of extremal points is that the point furthest from any given point must itself be an extremal point. The distance considered here refers to the Euclidean distance in space, correlating with the biharmonic distance on graphs. Determining whether a node is or is not the furthest node against any other node on the graph is time-consuming. To expedite this, we employ node centrality with two prerequisites: (1) it encapsulates the graph's furthest distance information, (2) it can be computed efficiently.

The eccentricity centrality, related to the shortest path distance, serves our purpose.
We use the algorithm \textsc{EccComp}($\calG$) in~\cite{LiQiQiChZhLi22} to calculate the eccentricities of all nodes.
If any node's eccentricity is smaller than any one of its neighbor's eccentricity, this node is a central node and we prune this node. 
Since this step prunes those inner nodes, it does not influence the approximate accuracy of our algorithm which computes extremal points. Moreover, after adding a few edges to the graph, the central nodes should not be affected substantially and remain central, so we can simply preprocess this step once to prune those central nodes before using algorithm \textsc{ApproxConv}.

Some prior research suggests that node pairs with maximal effective resistance decrease the Kirchhoff index of a graph significantly~\cite{VaDeCe17,WaPoKoVa14}. Other studies sample nodes based on their diagonal entries in $\LL^\dag$, an effective resistance-based metric that captures the total effective resistances between the node and all other nodes~\cite{PrKoMe22}. However, the focus on the largest distance in the convex hull does not match this approach well. Moreover, the computational expense of calculating $\LL^\dag$ diagonal values outweighs that of eccentricities~\cite{AnPrVaMe20,LiQiQiChZhLi22}. Hence, to avoid excessive time on preprocessing, we select eccentricities as the key metric to prune central nodes.

\subsubsection{Faster Algorithm}

Armed with techniques of preprocessing of the convex hull approximation and coordinate update, we are now in position to propose a more efficient algorithm for Problem~\ref{pro:kim}, which is depicted in Algorithm~\ref{alg:fastgrad+}. While it is not an exact algorithm, it still can effectively  solve Problem~\ref{pro:kim}.

The structure of Algorithm~\ref{alg:fastgrad+} consists of a preprocessing step (Lines 1-13) and $k$ iterative rounds (Lines 14-23).
It first computes the eccentricities of all nodes (Line 5) in $O(m)$ time and then selects nodes that are not the central nodes (Lines 6-9) in $O(m)$ time. Then it computes the approximate projected coordinates of each node in $\tilde{O}(m\epsilon^{-2})$ time. Next, the steps within each round are systematically organized as follows: (1) Extremal point determination: The extremal point set of the point set $\hat{P}$ is determined in $O(n\log n \epsilon^{-2} \hat{l})$ time (Line 15-16). (2) Maximal distance finding: The maximal distance between the points within the extremal point set is identified in $O(\hat{l}^2 \log n\epsilon^{-2})$ time (Line 17). (3) Coordinates, solution and graph update: Finally, the coordinates are updated (Line 18-20) in $\tilde{O}(m + n\epsilon^{-2})$ and solution and graph are updated based on the obtained results (Lines 18-23).

Hence, the total time complexity of Algorithm~\ref{alg:fastgrad+} is  $\tilde{O}(km \hat{l}/\epsilon^2)$. The following theorem rigorously characterizes the quality of Algorithm~\ref{alg:fastgrad+}.

\begin{theorem}\label{thm:fastgrad+}
Given any positive integer $k$ and an error parameter $\epsilon \in (0,1)$, Algorithm~\ref{alg:fastgrad+} executes in $\tilde{O}(km \hat{l}/\epsilon^2)$ time. Through the iterative selection of $k$ edges, the algorithm produces a solution $T$. Furthermore, the gradient of the edge selected in each round constitutes an $\epsilon$-approximation of the maximal gradient of the objective function.
\end{theorem}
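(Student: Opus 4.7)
The plan is to mirror the structure of the proof for Theorem~\ref{Performance} and then separately handle the two new ingredients introduced in Algorithm~\ref{alg:fastgrad+}: the eccentricity-based pre-pruning and the Sherman-Morrison-based coordinate refresh. The runtime and the $\epsilon$-approximation must be argued together, since the approximation budget has to survive the new update scheme at every iteration.

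First I would verify the runtime by accounting for each block of Algorithm~\ref{alg:fastgrad+}. The preprocessing stage computes node eccentricities via \textsc{EccComp} in $O(m)$ time, discards central nodes in a single edge scan of $O(m)$ time, and then calls \textsc{Solve} a total of $t=\lceil\log n/\beta^2\rceil$ times to build the initial approximate projected coordinate set, for an overall $\tilde{O}(m/\epsilon^2)$. Each of the $k$ iterations then performs (i) a call to \textsc{ApproxConv} on the pruned point set in $O(n\hat{l}\log n/\epsilon^2)$ time, (ii) a scan of the extremal set in $O(\hat{l}^2\log n/\epsilon^2)$ time to locate the furthest pair, and (iii) one Laplacian solve plus $O(n\log n/\epsilon^2)$ matrix--vector work to refresh the coordinates via Sherman--Morrison. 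Summing over the $k$ rounds and using $m\geq n$, the dominant term is $\tilde{O}(km\hat{l}/\epsilon^2)$.

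Second, for the quality guarantee I would reuse the composition already set up in the proof of Theorem~\ref{Performance}: with the parameter choices $\mu=\epsilon/24$, $\beta=\epsilon/3$, and $\delta=\epsilon/3$, the JL projection contributes relative error $\beta$ by Lemma~\ref{lemma:JL}, the Laplacian solver contributes $\delta$ by Lemma~\ref{lem:solvecomp}, and the convex-hull approximation contributes $8\mu$ via Lemma~\ref{lem:appconv} together with Equation~\eqref{equ:cvapp}. These three compose additively to $\epsilon$ in the squared distance, and since the gradient $c(e)=\bb_e^\top\LL^{2\dag}\bb_e$ is exactly the squared Euclidean distance between the images of the endpoints under $\LL^\dag$, the edge returned in each iteration has gradient within a $(1\pm\epsilon)$ factor of the per-round maximum.

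Third, I need to verify that the two new ingredients do not spoil these guarantees. For eccentricity pruning, the argument is that any node whose eccentricity is dominated by a neighbor is interior with respect to the relevant distance structure, so it cannot realize the diameter of the point cloud; one then argues that the surviving point set has the same diameter as $\hat{P}$, preserving the estimate used in Theorem~\ref{Performance}. For the Sherman--Morrison coordinate refresh, each update needs only one Laplacian solve for $\xx=\bb_e^\top\LL^\dag$ with tolerance $\delta_0$; by selecting $\delta_0$ polynomially small in $\epsilon/k$, the accumulated error across all $k$ rounds stays within the slack already absorbed by $\delta$, so the invariant required by Lemma~\ref{lem:solvecomp} continues to hold at every iteration. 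The main obstacle is exactly this pruning justification: a fully rigorous argument either requires a structural lemma linking the eccentricity ordering of a node and its neighbors to non-extremality of its image under $\LL^\dag$, or else a mild restatement in which the approximation guarantee is quantified over the pruned point set. I would adopt the second route, as it faithfully reflects how the algorithm is deployed and sidesteps a delicate biharmonic-geometric claim.
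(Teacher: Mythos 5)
Your proposal follows essentially the same route as the paper: the paper does not give a separate formal proof of this theorem, but derives it from the same per-block runtime accounting you give (eccentricity computation and pruning in $O(m)$, $t=\lceil\log n/\beta^2\rceil$ Laplacian solves for the initial coordinates, then per iteration one \textsc{ApproxConv} call, one furthest-pair scan, and one Sherman--Morrison refresh), combined with the error composition $\beta+\delta+8\mu\le\epsilon$ already established for Theorem~\ref{Performance}. Your treatment of the coordinate refresh (taking $\delta_0$ polynomially small in $\epsilon/k$ to control accumulation over rounds) is in fact more careful than the paper, which only remarks informally that the update ``does not lead to more errors.''

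The one point worth emphasizing is that the obstacle you flag at the end is a genuine gap in the paper's own argument, not just in yours. The paper asserts that pruning any node whose eccentricity is dominated by a neighbor ``does not influence the approximate accuracy,'' but eccentricity is a shortest-path quantity while extremality of $\hat{p}_i=\XX\ee_i$ is a Euclidean (biharmonic) property; no lemma in the paper connects the two, so the claim that the pruned point set has the same diameter as $\hat{P}$ is heuristic. Your proposed fallback --- restating the $\epsilon$-approximation guarantee relative to the pruned point set $Y$ rather than all of $V$ --- is the honest way to make the theorem rigorous as written, and is consistent with how the paper actually uses the result (it only claims \textsc{FastGrad+} achieves ``similar quality'' to \textsc{FastGrad} empirically).
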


Since our pruning and updating process does not affect the accuracy  of the theoretical analysis of our algorithm, while reducing the running time in practice, Algorithm \textsc{FastGrad+} is more efficient than Algorithm \textsc{FastGrad}, and both achieve similar quality.

\begin{small}
\begin{algorithm}[tb]
\begin{small}
	\caption{\textsc{FastGrad+}$(\calG, \epsilon,k)$}
    \label{alg:fastgrad+}
	\Input{
		A connected graph $\calG=(V,E)$; a real number $\epsilon>0$; an integer $k$\\
	}
	\Output{
		$T$: A subset of $(V \times V)\backslash E$ satisfying $|T|=k$
	}
 Initialize solution $T = \emptyset$\;
 Set $\mu = \epsilon/24$, $\beta = \epsilon/3$, $\delta = \epsilon/3$\;
 Set $\gamma =  \frac{\delta}{3n} \sqrt{\frac{6(1-\beta)}{n(n^2-1)(1+\beta)}}$ \;
 $t = \lceil \log(n)/\beta^2 \rceil$\;
 $\{(i,p_i)\} \leftarrow \textsc{EccComp}(\calG)$\;
 $Y =\emptyset$\;
 \For{$i\in V$}{
    \If{$p_i \geq p_j, j \in N_i$}{
        $Y \leftarrow Y \cup \{i\}$
    }
 }
 $\LL \leftarrow $ the Laplacian matrix of graph $\calG$\;
   Generate random Gaussian matrices $\QQ_{t\times n}$\;
 \For{$j=1$ to $t$}{
    $\XX_{i,:}$ = \textsc{Solve} ($\LL,\QQ_{i,:}, \gamma$)
    }
 \For{$i=1$ to $k$}{
 $\hat{P} = \{\hat{p}_i | i\in Y \}$ with the coordinate of each point being $\XX_{:,i}$\;
 $H = \textsc{ApproxConv}(\hat{P},\mu)$\;
 $(x,y) \leftarrow \argmax_{u,v\in H} \norm{\XX_{:,u}-\XX_{:,v}}_2^2$\;
 $\yy $ = \textsc{Solve} ($\LL,\ee_x-\ee_y, \gamma$) \;
 \For{$i \in Y$}{
 $\XX_{i,:} \leftarrow \XX_{i,:} - \frac{\QQ\yy \yy^\top \ee_i}{\yy^\top (\ee_x-\ee_y)}$
 }
 Update solution $T \leftarrow T \cup \{(x,y)\}$\;
 Update the graph $\calG \leftarrow \calG(V,E\cup \{(x,y)\})$\;
 Update the Laplacian matrix $\LL$\;
 }
\Return $T$.
\end{small}
\end{algorithm}
\end{small}

\subsection{Convex Hull Approximation for Only Once}

Although we have reduced the time complexity of our algorithm by fast updates of the coordinate and pruning the inner nodes, we still need to use the algorithm \textsc{ApproxConv} in each iteration, which takes much time. Since we add a small number of edges to the graph, the coordinate of each node does not change significantly during the edge adding process, which means that the extremal point set also does not change substantially after edge addition. So, we can compute the convex hull only once to get the extremal point set, then find edges from this reduced set for $k$ iterations.
This is the most efficient algorithm proposed in the present paper, and the quality is illustrated in the following theorem.

\begin{theorem}\label{thm:alg3}
Given any positive integer $k$ and an error parameter $\epsilon \in (0,1)$, Algorithm~\ref{alg:oneconv} executes in $\tilde{O}(km \hat{l}\epsilon^{-2})$ time. Through the iterative selection of $k$ edges, the algorithm produces a solution $T$.
\end{theorem}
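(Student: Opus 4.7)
The plan is to establish Theorem~\ref{thm:alg3} by a direct bookkeeping of Algorithm~\ref{alg:oneconv}'s phases. Since the claim contains only a runtime bound and a trivial cardinality statement (no per-iteration relative-error guarantee, unlike Theorems~\ref{Performance} and~\ref{thm:fastgrad+}), the argument reduces to summing the cost of a single preprocessing stage plus $k$ iterative stages; the conclusion $|T|=k$ follows because exactly one edge is appended per round.

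First, I would split the algorithm into preprocessing (executed once) and the main loop. In preprocessing, \textsc{EccComp}$(\calG)$ runs in $O(m)$ by the algorithm of~\cite{LiQiQiChZhLi22}, and filtering to obtain $Y\subseteq V$ takes $O(m)$. Setting $t=\lceil\log n/\beta^2\rceil = O(\log n/\epsilon^2)$, the random matrix $\QQ_{t\times n}$ is generated in $O(nt)$ time, and the $t$ calls to \textsc{Solve} on $(\LL,\QQ_{j,:},\gamma)$ cost $\tilde{O}(m/\epsilon^2)$ total by Lemma~\ref{lem:solver}. The key difference from Algorithm~\ref{alg:fastgrad+} is that \textsc{ApproxConv}$(\hat{P},\mu)$ is now invoked \emph{only once}, costing $O(n\hat{l}(t+\mu^{-2}))=\tilde{O}(n\hat{l}/\epsilon^2)$ by Lemma~\ref{lem:appconv}; this yields an extremal-point set $H$ of size $\hat{l}$ that is reused in every subsequent iteration.

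Next, I would bound the per-iteration cost. Finding $(x,y)=\argmax_{u,v\in H}\|\XX_{:,u}-\XX_{:,v}\|_2^2$ scans the $O(\hat{l}^2)$ pairs in $t$-dimensional space, taking $\tilde{O}(\hat{l}^2/\epsilon^2)$ time. The single call \textsc{Solve}$(\LL,\ee_x-\ee_y,\gamma)$ costs $\tilde{O}(m)$. The Sherman--Morrison update of the coordinates, applied only to points indexed by $H$ (or optionally all of $Y$), is $\tilde{O}(\hat{l}/\epsilon^2)$ since each update is an $O(t)$-dimensional rank-one correction. Appending $(x,y)$ to $T$ and updating $\LL$ is $O(1)$. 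Hence each of the $k$ iterations costs $\tilde{O}(m+\hat{l}^2/\epsilon^2)$.

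Summing preprocessing and loop contributions gives
\[
\tilde{O}\!\left(m/\epsilon^2+n\hat{l}/\epsilon^2+k\bigl(m+\hat{l}^2/\epsilon^2\bigr)\right),
\]
which is absorbed into $\tilde{O}(km\hat{l}/\epsilon^2)$ using $n\le m$ for connected graphs and $\hat{l}\le n$. The only delicate point, rather than a genuine obstacle, is justifying that the \emph{one-shot} convex hull suffices computationally: because $H$ is frozen after preprocessing, the iterations do not pay the $\tilde{O}(n\hat{l}/\epsilon^2)$ convex-hull cost that dominates Algorithms~\ref{alg:fastgrad} and~\ref{alg:fastgrad+}; this is precisely why the theorem drops the $\epsilon$-approximation guarantee on the per-iteration gradient, since newly created extremal points after edge additions are no longer tracked. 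The cardinality statement $|T|=k$ is immediate from the loop structure.
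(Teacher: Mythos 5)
Your proof is correct and follows essentially the same route the paper takes (the paper states Theorem~\ref{thm:alg3} without an explicit proof, but its analyses of Algorithms~\ref{alg:fastgrad} and~\ref{alg:fastgrad+} are exactly this kind of phase-by-phase cost accounting, with the single \textsc{ApproxConv} call being the distinguishing saving and the loss of the per-iteration gradient guarantee being the acknowledged trade-off). One small nit: \textsc{EccComp} does not appear in Algorithm~\ref{alg:oneconv} as listed (the reference to $Y$ there is a leftover from Algorithm~\ref{alg:fastgrad+}), but including its $O(m)$ cost is harmless to the bound.
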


\begin{small}
\begin{algorithm}[tb]
\begin{small}
	\caption{\textsc{OneConv}$(\calG, \epsilon,k)$}
    \label{alg:oneconv}
	\Input{
		A connected graph $\calG=(V,E)$; a real number $\epsilon>0$; an integer $k$\\
	}
	\Output{
		$T$: A subset of $(V \times V)\backslash E$ satisfying $|T|=k$
	}
 Initialize solution $T = \emptyset$\;
 Set $\mu = \epsilon/24$, $\beta = \epsilon/3$, $\delta = \epsilon/3$\;
 Set $\gamma =  \frac{\delta}{3n} \sqrt{\frac{6(1-\beta)}{n(n^2-1)(1+\beta)}}$ \;
 $t = \lceil \log(n)/\beta^2 \rceil$\;
 $\LL \leftarrow $ the Laplacian matrix of graph $\calG$\;
 Generate random Gaussian matrices $\QQ_{t\times n}$\;
 \For{$j=1$ to $t$}{
    $\XX_{i,:}$ = \textsc{Solve} ($\LL,\QQ_{i,:}, \gamma$)
    }
     $\hat{P} = \{\hat{p}_i | i\in Y \}$ with the coordinate of each point being $\XX_{:,i}$\;
     $H = \textsc{ApproxConv}(\hat{P},\mu)$\;
 \For{$i=1$ to $k$}{
 $(x,y) \leftarrow \argmax_{u,v\in H} \norm{\XX_{:,u}-\XX_{:,v}}_2^2$\;
 $\yy $ = \textsc{Solve} ($\LL,\ee_x-\ee_y, \gamma$) \;
 \For{$i \in H$}{
 $\XX_{i,:} \leftarrow \XX_{i,:} - \frac{\QQ\yy \yy^\top \ee_i}{\yy^\top (\ee_x-\ee_y)}$
 }
 Update solution $T \leftarrow T \cup \{(x,y)\}$\;
 Update the graph $\calG \leftarrow \calG(V,E\cup \{(x,y)\})$\;
 Update the Laplacian matrix $\LL$\;
 }
\Return $T$.
\end{small}
\end{algorithm}
\end{small}





\section{Experiments}

\subsection{Experiment Setup}

\textbf{Data Description.} We utilize ten datasets of undirected networks from the Network Repository~\cite{RoAh15} and SNAP~\cite{LeSo16}. Our experiments focus on the largest connected component (LCC) of these networks. Table~\ref{tab:1} provides a detailed statistical overview of the LCC for each network. Notably, the biggest LCC encompasses over 5 million nodes and 12 million edges.

\begin{table}[h]
	\centering
		\caption{Networks and the numbers of nodes and edges of their LCC.}\label{tab:1}
\fontsize{8}{8}\selectfont			
\begin{tabular}{m{1.2cm}<{\raggedright}m{0.8cm}<{\raggedleft}m{0.8cm}<{\raggedleft}|m{1.2cm}<{\raggedright}m{1.0cm}<{\raggedleft}m{1.0cm}<{\raggedleft}}
				\toprule
				Networks & Nodes & Edges & Networks & Nodes & Edges \\
				\midrule
                EmailUniv &1133&5451 & EmailEU &32430 &54397\\
                BitcoinAlpha  &3783&24186& Douban &154908 &327162 \\
                Gnutella08 &6301 &20777& Delicious &536108 &1365961\\
                Government &7057 &89429&  YoutubeSnap& 1134890 &2987624\\
                Anybeat &12645& 67053& DBLP &5624219 &12282055\\
				\bottomrule
			\end{tabular}
\end{table}






\textbf{Experimental Setup.} All experiments were executed on a Linux server equipped with a 4.2 GHz Intel i9-9900K CPU and 128G memory. We restricted our experiments to a single-threaded environment. The experiments were implemented using the \textit{Julia} programming language.

\textbf{State-of-the-Art Algorithm Implementation.} The work of~\cite{PrKoMe22} introduced several algorithms tailored for the Kirchhoff index optimization problem. Among these, the \textsc{ColStochJLT} algorithm stands out as the state-of-the-art, demonstrating superior runtime and quality metrics in comparison to other baseline algorithms. We adopted the settings and configurations as delineated in~\cite{PrKoMe22} for our experiments. Subsequent sections will present the quality results of the \textsc{ColStochJLT} algorithm.

\textbf{Edge Addition Strategy Comparison.} We compare the quality of various algorithms: \textsc{Deter}, \textsc{Grad}, \textsc{FastGrad}, \textsc{FastGrad+}, \textsc{OneConv}, and \textsc{ColStochJLT}. It is worth noting that we have excluded the \textsc{Approx} algorithm from this comparison. Although it is an enhancement of \textsc{Deter}, it does not address the computational challenge at hand.

\begin{figure}[t]
		\centering
		\includegraphics[width=0.9\linewidth]{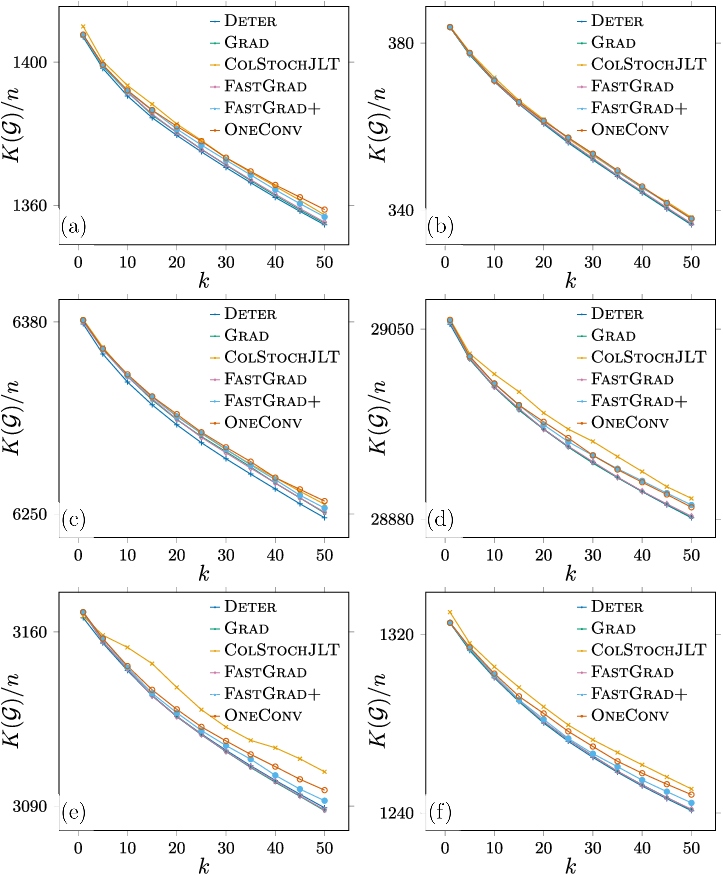}
		\caption{Kirchhoff index divided by $n$ returned by all algorithms for $k=1,2,\ldots,50$ on six small networks, BitcoinAlpha (a), EmailUniv (b), Anybeat (c), EmailEU (d), Gnutella08 (e), and Government (f). \label{fig:1}}
\end{figure}

\begin{figure}[t]
		\centering
		\includegraphics[width=0.9\linewidth]{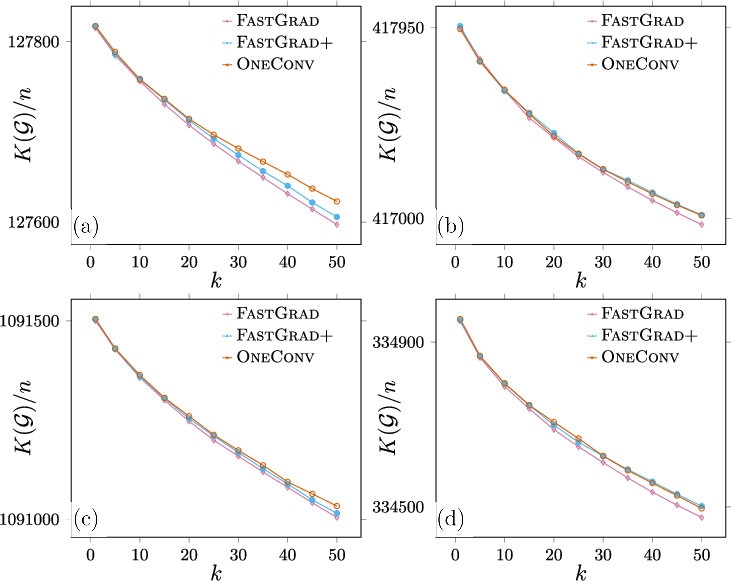}
		\caption{Kirchhoff index divided by $n$ returned by algorithms \textsc{FastGrad}, \textsc{FastGrad+}, and \textsc{OneConv}, for $k=1,2,\ldots,50$ on four large networks, Douban (a), Delicious (b), YoutubeSnap (c), and DBLP (d). \label{fig:2}}
\end{figure}

\begin{figure}[t]
		\centering
		\includegraphics[width=0.9\linewidth]{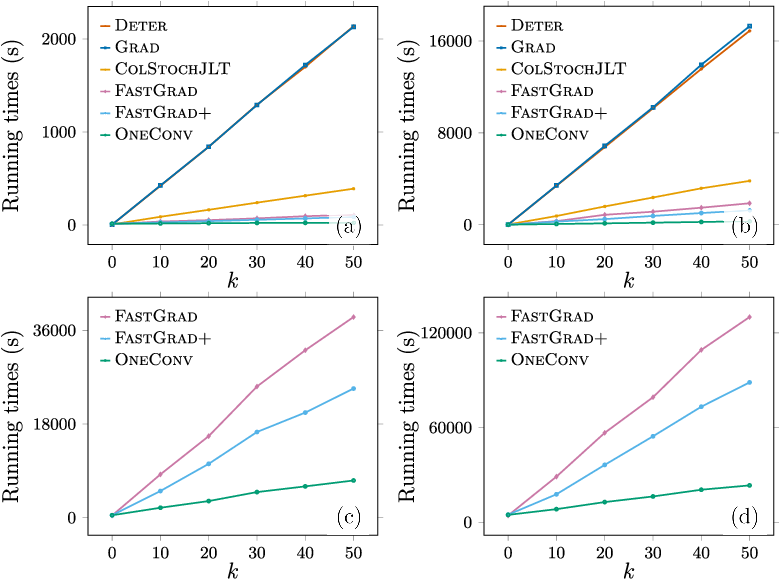}
		\caption{Running time of different algorithms for $k=10,20,\ldots,50$ on four networks, Anybeat (a), EmailEU (b), YoutubeSnap (c), and DBLP (d). \label{fig:3}}
\end{figure}

\subsection{Algorithms Comparison}

We investigate the impact of the number of added edges, denoted as $k$, on the quality of our proposed algorithms. We vary $k$ in the range $5, 10, \ldots, 50$, setting parameters $\mu=0.01$, $\beta=0.1$, and $\gamma=0.1$ without directly specifying $\epsilon$. For smaller networks (those with fewer than 100,000 nodes), we execute all algorithms and present their quality in Figure~\ref{fig:1}. Notably, the gradient-based algorithm \textsc{Grad} exhibits quality comparable to the standard greedy algorithm \textsc{Deter}. This observation suggests that identifying a single edge with the highest gradient is an effective strategy for the Kirchhoff index optimization problem. Furthermore, our newly introduced algorithms, namely \textsc{FastGrad}, \textsc{FastGrad+}, and \textsc{OneConv}, demonstrate comparable quality. For larger networks, those with more than 100,000 nodes, \textsc{ColStochJLT} is infeasible due to its prohibitive computational time, as noted in~\cite{PrKoMe22}. In contrast, our algorithms remain operational. As depicted in Figure~\ref{fig:2}, all three of these algorithms exhibit good quality.

Figure~\ref{fig:3} delineates the runtime of various algorithms across four networks for $k$ values ranging from 10 to 50 in increments of 10. Among the algorithms, our proposed methods consistently outperform in terms of computational efficiency, with \textsc{OneConv} emerging as the most time-efficient. Specifically, on the largest network, which comprises over 5 million nodes and 12 million edges, \textsc{FastGrad} requires approximately 36 hours, \textsc{FastGrad+} takes around 24 hours, while \textsc{OneConv} completes in  mere 3 hours. It is worth noting that both \textsc{FastGrad+} and \textsc{OneConv} demand more preprocessing time compared to \textsc{FastGrad}, yet they compensate with reduced overall runtime.

Our empirical findings align well with our theoretical analyses. Given that \textsc{FastGrad} possesses the highest time complexity, it is unsurprising that it demands the longest runtime. Conversely, \textsc{OneConv},  with its lowest time complexity among all our proposed algorithms, ensures the quickest execution. However, while we offer error guarantees for \textsc{FastGrad}, \textsc{OneConv} does not provide such guarantees. This trade-off elucidates why \textsc{FastGrad} delivers superior quality, whereas \textsc{OneConv} has the shortest runtime.

\section{Related Work}

Efficient computation of effective resistances is essential in a large range of applications. This has initiated the introduction of many algorithms for estimating the resistance distances between node pairs. A method anchored in random projection was introduced in~\cite{SpSr08,MaGaKo15}, targeting the resistance distances between endpoints of each edge, which was subsequently optimized in~\cite{HaAkYo16} through the integration of Wilson’s algorithm~\cite{Wi96}. The work presented in~\cite{PeLoYo21} uses a localized algorithm to deduce pairwise effective resistance, leveraging random walk samplings and the formulation of random spanning trees. Moreover, Monte Carlo techniques were employed in~\cite{LiLiDa23, YaTa23} to augment the efficacy of preceding algorithms, as delineated in~\cite{PeLoYo21}. While the present landscape offers a suite of methods for computing resistance distances between individual nodes, the task of computing the Kirchhoff index remains challenging, necessitating calculations involving resistance distances for quadratically many node pairs.


A significant body of research has delved into the problem of adding new edges to graphs to minimize the Kirchhoff index~\cite{KlRa93, GhBoSa08}. The authors in~\cite{KoAc23} tackled the NP-hard problem of selecting a maximum of $k$ edges for this purpose. Interestingly, their proof establishes the NP-hardness of the Kirchhoff index optimization problem via a reduction from the well-known 3-colorability problem. Several prior studies, including~\cite{SuShLyDo15, YaMaQiWe18, HuHeTa19, PrKoMe22, BlWaNaWa23}, have investigated the optimization of the Kirchhoff index by the addition of a predetermined number of edges, from an algorithmic perspective. Initial greedy algorithms proposed in this domain exhibited a time complexity of $O(kn^3)$~\cite{SuShLyDo15, YaMaQiWe18}. However, advancements by~\cite{PrKoMe22} have since whittled this down to a more efficient $\tilde{O}(n^2+km)$. In a separate line of work,~\cite{GhBoSa08} delved into the problem of determining edge weights in an existing weighted network to curtail the total effective resistance. Drawing from this, the authors of~\cite{WaPoKoVa14} formulated both upper and lower bounds for total effective resistance, considering both edge addition and deletion operations. The problem of optimally introducing a single edge to graphs, defined by a set number of nodes and diameter, to minimize total effective resistance was carefully investigated in~\cite{ElSpVaJaKo11}. However, a notable drawback of these methodologies remains their limited scalability, especially to large-scale graphs with strong expansion properties.



Besides the Kirchhoff index, several novel modifications to it have emerged over recent years. Notably, these include the multiplicative degree-Kirchhoff index~\cite{ChZh07} and the additive degree-Kirchhoff index~\cite{GuFeYu12}.
A notable revelation from the literature is the equivalence of the multiplicative degree-Kirchhoff index $K^*(\mathcal{G})$ of a graph $\mathcal{G}$ to four times the edge count in graph $\calG$ multiplied by the graph's Kemeny constant~\cite{ChZh07}. The Kemeny constant itself has diverse applications across various domains~\cite{Hu14}. For instance, it is harnessed as a yardstick for gauging user navigation efficiency within the vast expanse of the World Wide Web~\cite{LeLo02}. Further applications extend to measuring surveillance efficiency in robotic networks~\cite{PaAgBu15} and characterizing the noise robustness intrinsic to specific formation control protocols~\cite{JaOl19}. Due to its wide applications, the Kemeny constant has attracted much recent attention from the data mining and data management communities~\cite{XuShZhKaZh20,XiZh24SIGMOD, XiZh24KDD}.



In practical graph editing, the operation of edge addition has been ubiquitously employed, catering to a myriad of application purposes.
Noteworthy instances include enhancing the centrality of a particular node~\cite{CrDaSeVe16,DaOlSe19,ShYiZh18}, and amplifying the quantity of spanning trees~\cite{LiPaYiZh20}. Recent forays into graph theory have witnessed the edge addition operation's utility in reducing the network diameter~\cite{FrGaGuMa15,AdGi22}. Parallel endeavors have explored graph augmentation through edge additions, targeting diverse goals such as bolstering algebraic connectivity~\cite{GhBo06}. However, it is imperative to note that these existing methodologies and approaches aren't directly translatable to the distinct challenge of minimizing the Kirchhoff index through edge addition.



\section{Conclusion}
In this paper, we studied the Kirchhoff index minimizing problem by adding $k$ new edges. We first proposed greedy algorithms to solve this problem, providing relative error guarantees with respect to the optimal solution, based on the bounds of the submodularity ratio and the curvature. Then, we introduced a gradient-based greedy algorithm as a new paradigm to solve this problem. We leveraged geometric properties, convex hull approximation, and approximation of the projected coordinate of each point, to provide an efficient algorithm that approximates the gradient with a relative error guarantee and has nearly-linear time complexity.  We used pre-pruning and fast update techniques to further reduce the time complexity. We also proposed an algorithm by using the convex hull approximation only once. Our proposed algorithms have linear time complexity. Extensive experimental results on ten real-life networks demonstrated that our proposed algorithms outperform the state-of-the-art methods in terms of efficiency and effectiveness. Moreover, our fast algorithms are scalable to large graphs with over 5 million nodes and 12 million edges.

There are several potential avenues for future work. It would be interesting to leverage other novel approximation and sparsification techniques to speed up our proposed algorithms further while preserving the accuracy guarantee. Furthermore, we believe that the proof techniques developed in the present work can be exploited to provide more efficient algorithms for other relevant network-based optimization problems.
Note that the Kirchhoff index depends on the sum of non-zero eigenvalues of the Laplacian matrix $\LL$. Some other quantities are also dependent on serval or all eigenvalues of $\LL$, such as condition number of $\LL$~\cite{ClMoStWi79,SpTe11}, the number of spanning trees of a graph~\cite{LiPaYiZh20}. It is expected that our algorithms can be extended to various optimization problems involving these quantities through edge addition.
In addition, our techniques could also prove useful for tackling the problems of optimizing Kemeny's constant and resistance distance.

\bibliographystyle{IEEEtran}
\normalem
\bibliography{tkde}

\begin{IEEEbiography}
[{\includegraphics[width=1in,height=1.25in,clip,keepaspectratio]{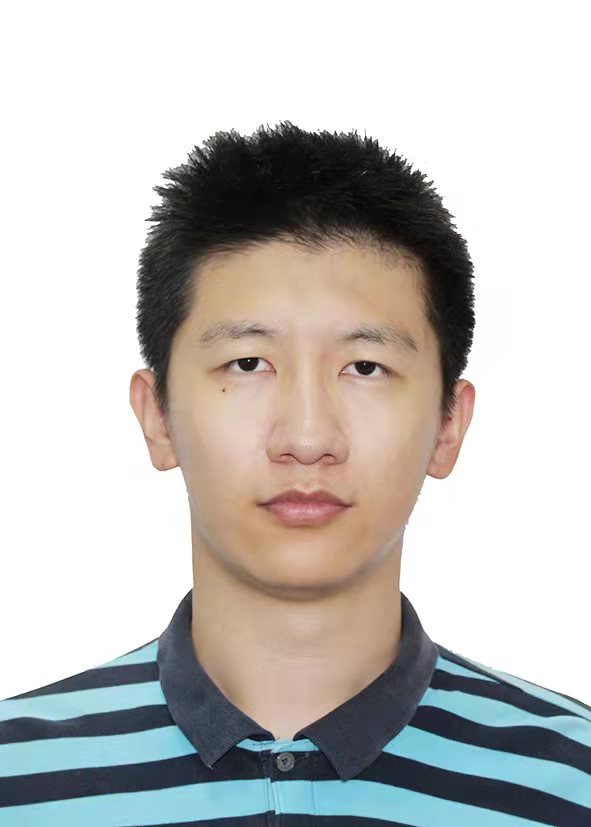}}]
{Xiaotian Zhou}
received the B.S. degree in mathematics science from Fudan University, Shanghai, China, in 2020. He is currently pursuing the PhD's degree in School of Computer Science, Fudan University, Shanghai, China. His research interests include network science, computational social science, graph data mining, and social network analysis.
\end{IEEEbiography}

\begin{IEEEbiography}
[{\includegraphics[width=1in,height=1.25in,clip,keepaspectratio]{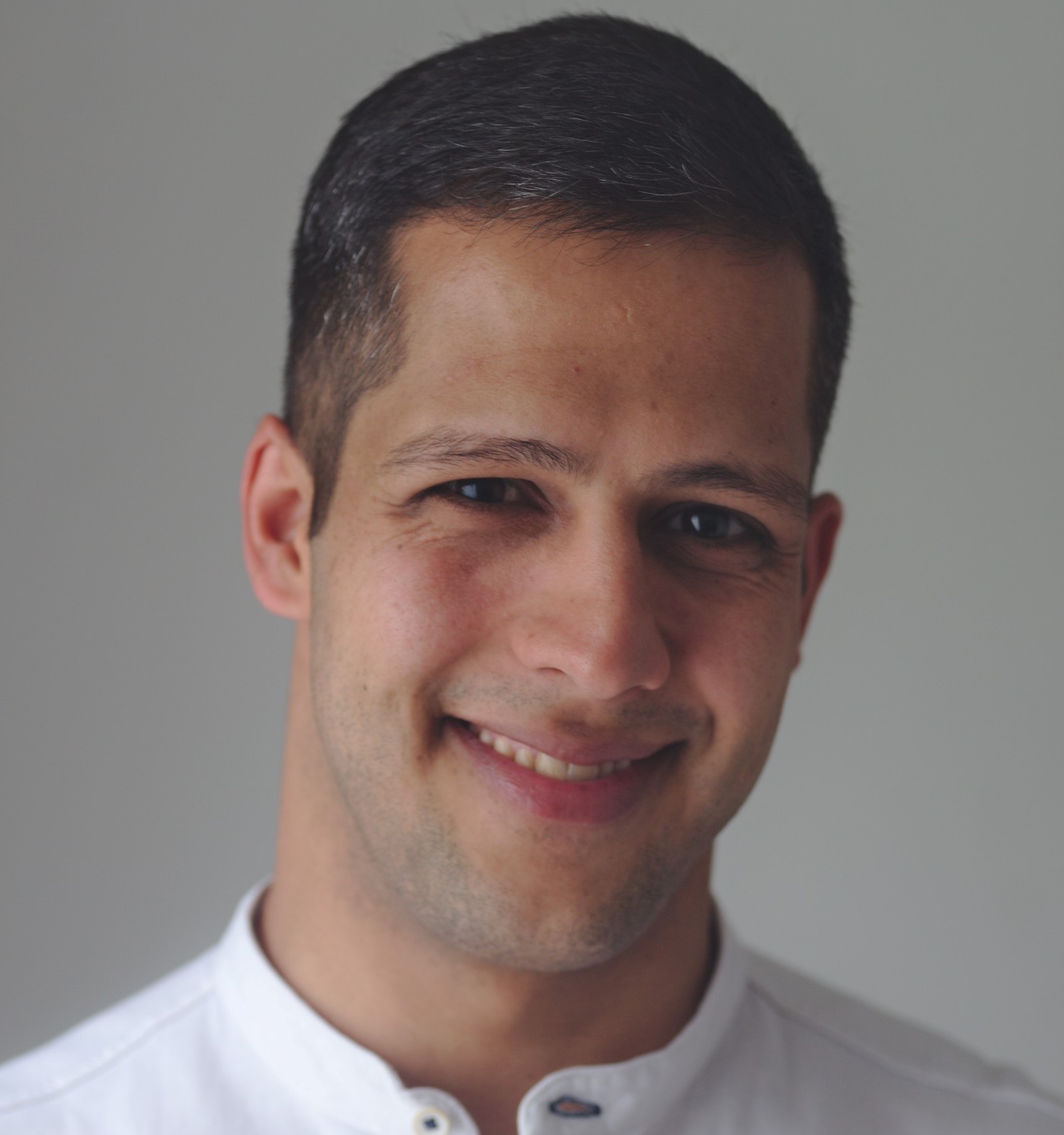}}]
{Ahad N. Zehmakan} received the PhD degree from ETH Zurich, in 2020, and is currently an assistant professor of computer science with the School of Computing, Australian National University. His research interests include graph and randomized algorithms, information spreading in social networks, random graph models, complexity theory, parallel and distributed computing, and network security.
\end{IEEEbiography}

\begin{IEEEbiography}
[{\includegraphics[width=1in,height=1.25in,clip,keepaspectratio]{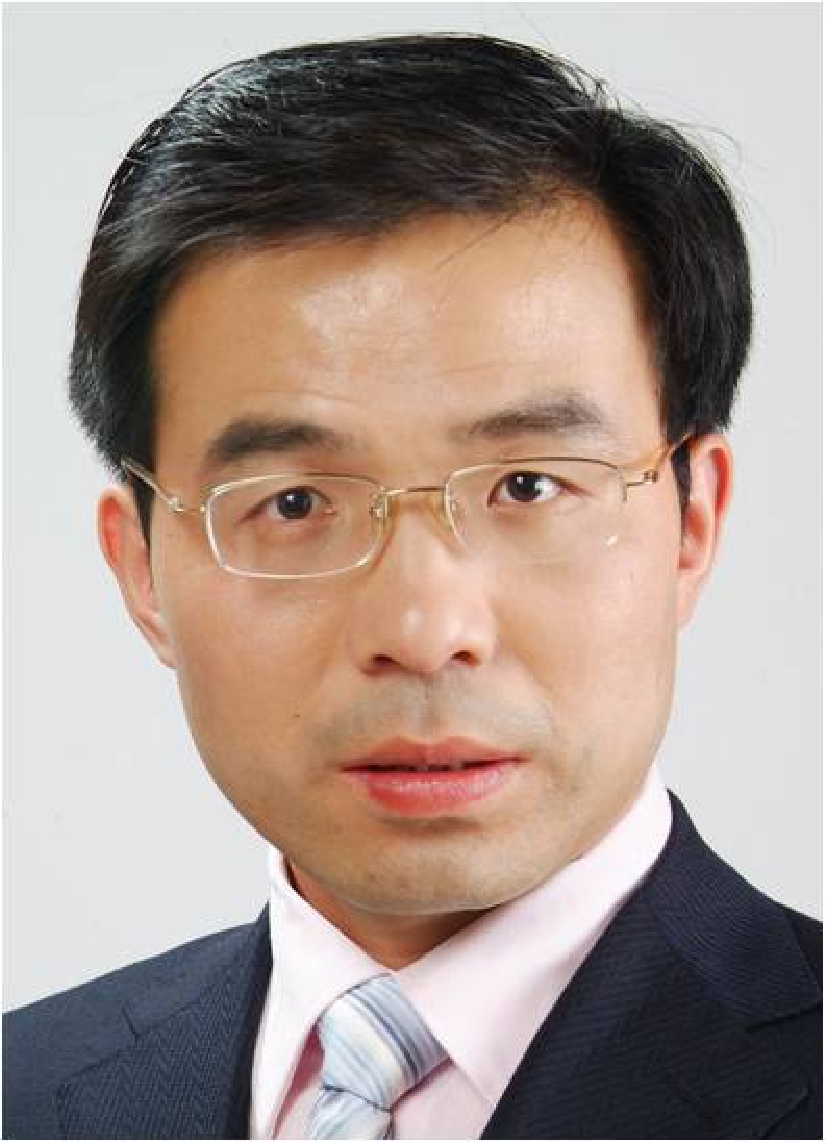}}]
{Zhongzhi Zhang} (Member, IEEE)
received the BSc degree in applied mathematics from Anhui University, Hefei, China, in 1997, and the PhD degree
in management science and engineering from the
Dalian University of Technology, Dalian, China, in
2006. From 2006 to 2008, he was a post-doctoral
research fellow with Fudan University, Shanghai,
China, where he is currently a full professor with
the School of Computer Science. He has published
more than 160 papers in international journals or
conferences. Since 2019, he has been selected as one
of the most cited Chinese researchers (Elsevier) every year. His current research
interests include network science, graph data mining, social network analysis,
computational social science, spectral graph theory, and random walks. He was
a recipient of the Excellent Doctoral Dissertation Award of Liaoning Province,
China, in 2007, the Excellent Post-Doctor Award of Fudan University, in 2008,
the Shanghai Natural Science Award (third class), in 2013, the CCF Natural
Science Award (second class), in 2022, and the Wilkes Award for the best paper
published in The Computer Journal, in 2019.
\end{IEEEbiography}

\end{document}